\theoremstyle{plain}
\newtheorem{theorem}{Theorem}
\newtheorem{lemma}[theorem]{Lemma}
\newtheorem{corollary}[theorem]{Corollary}
\theoremstyle{definition}
\newtheorem{definition}[theorem]{Definition}
\newtheorem{problem}[theorem]{Problem}
\theoremstyle{remark}
\setlist[itemize]{label=--}
\setlist[enumerate]{label=(\arabic*),labelindent=\parindent,leftmargin=*}
\DeclarePairedDelimiter\braces{\{}{\}}
\NewDocumentCommand\set{O{}mg}{\ensuremath{\braces[#1]{#2\IfNoValueTF{#3}{}{\,:\,#3}}}}
\DeclareMathOperator{\indeg}{in-deg}
\DeclareMathOperator{\outdeg}{out-deg}
\DeclareMathOperator{\dist}{dist}
\DeclareMathOperator{\vbl}{vbl}
\DeclareMathOperator{\mypoly}{poly}
\newcommand{\N}{\mathbb{N}}
\newcommand{\Real}{\mathbb{R}}
\newcommand{\lllemma}{Lov\'asz local lemma}
\newcommand{\LLLemma}{Lov\'asz Local Lemma}
\newclass{\local}{LOCAL}
\newcommand{\Ep}{\mathcal{E}}
\newcommand{\namedref}[2]{\hyperref[#2]{#1~\ref*{#2}}}
\newcommand{\sectionref}[1]{\namedref{Section}{#1}}
\newcommand{\theoremref}[1]{\namedref{Theorem}{#1}}
\newcommand{\figureref}[1]{\namedref{Figure}{#1}}
\newcommand{\lemmaref}[1]{\namedref{Lemma}{#1}}
\newcommand{\defref}[1]{\namedref{Definition}{#1}}
\newenvironment{myabstract}
               {\list{}{\listparindent 1.5em%
                        \itemindent    \listparindent
                        \leftmargin    1cm
                        \rightmargin   1cm
                        \parsep        0pt}%
                \item\relax}
               {\endlist}
\newenvironment{mycover}
               {\list{}{\listparindent 0pt
                        \itemindent    \listparindent
                        \leftmargin    1cm
                        \rightmargin   1cm
                        \parsep        0pt}%
                \raggedright
                \item\relax}
               {\endlist}
\newcommand{\myemail}[1]{\,$\cdot$\, {\small #1}\par\vspace{2pt}}
\newcommand{\myaff}[1]{{\small #1\par}\bigskip}
\begin{document}

\mbox{}
\begin{mycover}
{\huge \bfseries A Lower Bound for the Distributed \\ Lov\'asz Local Lemma \par}
\bigskip
\bigskip

\textbf{Sebastian Brandt}
\myemail{brandts@tik.ee.ethz.ch}
\myaff{ETH Zurich, Switzerland}

\textbf{Orr Fischer}
\myemail{orrfischer@mail.tau.ac.il}
\myaff{School of Computer Science, Tel Aviv University, Israel}

\textbf{Juho Hirvonen}
\myemail{juho.hirvonen@aalto.fi}
\myaff{Helsinki Institute for Information Technology HIIT,\\
Department of Computer Science, Aalto University, Finland}

\textbf{Barbara Keller}
\myemail{barkelle@tik.ee.ethz.ch}
\myaff{ETH Zurich, Switzerland}

\textbf{Tuomo Lempiäinen}
\myemail{tuomo.lempiainen@aalto.fi}
\myaff{Helsinki Institute for Information Technology HIIT,\\
Department of Computer Science, Aalto University, Finland}

\textbf{Joel Rybicki}
\myemail{joel.rybicki@aalto.fi}
\myaff{Helsinki Institute for Information Technology HIIT,\\
Department of Computer Science, Aalto University, Finland}

\textbf{Jukka Suomela}
\myemail{jukka.suomela@aalto.fi}
\myaff{Helsinki Institute for Information Technology HIIT,\\
Department of Computer Science, Aalto University, Finland}

\textbf{Jara Uitto}
\myemail{uitto@bitsplitters.com}
\myaff{Bitsplitters GmbH, Switzerland}
\end{mycover}

\bigskip
\begin{myabstract}
\noindent\textbf{Abstract.}
We show that any randomised Monte Carlo distributed algorithm for the Lov\'asz local lemma requires $\Omega(\log \log n)$ communication rounds, assuming that it finds a correct assignment with high probability. Our result holds even in the special case of $d = O(1)$, where $d$ is the maximum degree of the dependency graph. By prior work, there are distributed algorithms for the Lov\'asz local lemma with a running time of $O(\log n)$ rounds in bounded-degree graphs, and the best lower bound before our work was $\Omega(\log^* n)$ rounds [Chung et al.\ 2014].
\end{myabstract}
\thispagestyle{empty}
\setcounter{page}{0}
\newpage

\section{Introduction}\label{sec:intro}

In this work, we give a lower bound for the constructive \lllemma{} (LLL) in the context of distributed algorithms. We study the running time as a function of $n$ (the number of events), and prove a lower bound that holds even if $d$ (the maximum degree of the dependency graph) is bounded by a constant. By prior work, there are distributed algorithms for LLL with a running time of $O(\log n)$ communication rounds in this case and $o(\log n)$ rounds for restricted variants~\cite{chung14distributed}, and it is known that any distributed algorithm for LLL requires $\Omega(\log^* n)$ rounds~\cite{chung14distributed,linial92locality,naor91lower}. We prove a new lower bound of $\Omega(\log \log n)$ rounds.

\subsection{Distributed \LLLemma \label{ssec:dist-lll}}

Recall the following symmetric version of LLL:

\begin{theorem}[\lllemma]\label{thm:lll}
  Let $\mathcal{E} = \{ E_1, \ldots, E_n \}$ be a finite set of events such that each $E_i$ depends on at most $d$ other events. If $\Pr(E_i) \le p$ and $ep(d+1) \le 1$, then there is a positive probability that none of the events occur.
\end{theorem}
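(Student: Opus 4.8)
The plan is to run the classical inductive proof of the local lemma, specialised to uniform weights. Write $[n]=\{1,\dots,n\}$, let $\Gamma(i)\subseteq[n]\setminus\{i\}$ be the set of at most $d$ indices on which $E_i$ depends, so that $E_i$ is mutually independent of the family $\{E_j:j\notin\Gamma(i)\cup\{i\}\}$, and set $x:=\tfrac{1}{d+1}$. The heart of the argument is the quantitative claim that for every $i\in[n]$ and every $S\subseteq[n]\setminus\{i\}$ with $\Pr(\bigcap_{j\in S}\overline{E_j})>0$,
\begin{equation*}
  \Pr\Bigl(E_i \,\Big|\, \bigcap_{j\in S}\overline{E_j}\Bigr)\le x .
\end{equation*}
Granting this, and assuming $d\ge1$ (the trivial case $d=0$ is dispatched below), I would finish by the chain rule: ordering the events arbitrarily, the claim forces $\Pr(\bigcap_{j\le i}\overline{E_j})\ge(1-x)^{i}>0$, so every conditioning below is legitimate and
\begin{equation*}
  \Pr\Bigl(\bigcap_{i=1}^{n}\overline{E_i}\Bigr)=\prod_{i=1}^{n}\Pr\Bigl(\overline{E_i}\,\Big|\,\bigcap_{j<i}\overline{E_j}\Bigr)\ge(1-x)^{n}=\Bigl(1-\tfrac{1}{d+1}\Bigr)^{n}>0 .
\end{equation*}

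Next I would prove the claim by induction on $|S|$. For $|S|=0$ it is just $\Pr(E_i)\le p\le x$, which holds because $ep(d+1)\le1$ gives $p\le\tfrac{1}{e(d+1)}\le\tfrac{1}{d+1}$. For $|S|=s\ge1$, assuming the claim for all strictly smaller conditioning sets, partition $S=S_1\cup S_2$ with $S_1:=S\cap\Gamma(i)$ and $S_2:=S\setminus\Gamma(i)$, and set $B:=\bigcap_{k\in S_2}\overline{E_k}$ (note $\Pr(B)\ge\Pr(\bigcap_{j\in S}\overline{E_j})>0$). Then
\begin{equation*}
  \Pr\Bigl(E_i\,\Big|\,\bigcap_{j\in S}\overline{E_j}\Bigr)=\frac{\Pr\bigl(E_i\cap\bigcap_{j\in S_1}\overline{E_j}\,\big|\,B\bigr)}{\Pr\bigl(\bigcap_{j\in S_1}\overline{E_j}\,\big|\,B\bigr)} .
\end{equation*}
For the numerator, since $E_i$ is independent of $\{E_k:k\in S_2\}$,
\begin{equation*}
  \Pr\Bigl(E_i\cap\bigcap_{j\in S_1}\overline{E_j}\,\Big|\,B\Bigr)\le\Pr(E_i\mid B)=\Pr(E_i)\le p\le\tfrac{1}{e(d+1)}\le\tfrac{1}{d+1}\Bigl(1-\tfrac{1}{d+1}\Bigr)^{d}=x(1-x)^{d},
\end{equation*}
where the last step uses $(1-\tfrac{1}{d+1})^{d}\ge\tfrac1e$. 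If $S_1=\emptyset$ the denominator is $1$ and we are done; otherwise list $S_1=\{j_1,\dots,j_r\}$ with $r\le d$ and expand
\begin{equation*}
  \Pr\Bigl(\bigcap_{j\in S_1}\overline{E_j}\,\Big|\,B\Bigr)=\prod_{\ell=1}^{r}\Bigl(1-\Pr\bigl(E_{j_\ell}\,\big|\,\bigcap_{m<\ell}\overline{E_{j_m}}\cap B\bigr)\Bigr)\ge(1-x)^{r}\ge(1-x)^{d},
\end{equation*}
where the inductive hypothesis applies to the $\ell$-th factor because its conditioning set $\{j_1,\dots,j_{\ell-1}\}\cup S_2$ has size $(\ell-1)+|S_2|\le s-1<s$ and omits $j_\ell$; the same bound shows all those conditioning events are nonzero, so the division is valid. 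Dividing, the ratio is at most $x(1-x)^{d}/(1-x)^{d}=x$, completing the induction.

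I expect the only delicate part to be the bookkeeping in the inductive step: choosing the split of $S$ into neighbours and non-neighbours of $i$, checking that every conditional probability in the expansion of the denominator really refers to a strictly smaller set not containing the active index (so the hypothesis applies), and verifying that no conditioning event is null. The sole genuine inequality, $(1-\tfrac{1}{d+1})^{d}\ge\tfrac1e$, is immediate from $1+t\le e^{t}$. Finally, the degenerate case $d=0$ (mutually independent events) is not covered by the product bound above and is handled separately: $\Pr(\bigcap_i\overline{E_i})=\prod_i(1-\Pr(E_i))\ge(1-p)^{n}>0$, since $ep\le1$ forces $p\le1/e<1$.
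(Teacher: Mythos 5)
The paper does not prove Theorem~\ref{thm:lll} at all: it is stated as classical background (citing Erd\H{o}s--Lov\'asz and its successors) and used only as a black box to define the distributed LLL problem, whose lower bound is the actual content of the paper. So there is no ``paper's own proof'' to compare against.

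Your argument is the standard inductive proof of the symmetric local lemma, specialised from the asymmetric version with the uniform weight $x=1/(d+1)$, and it is correct. The key claim $\Pr\bigl(E_i\mid\bigcap_{j\in S}\overline{E_j}\bigr)\le x$ is proved by induction on $|S|$ with the usual split of $S$ into $\Gamma(i)$-neighbours and non-neighbours; mutual independence is used exactly once (in bounding the numerator); the telescoping of the denominator correctly appeals to the hypothesis on strictly smaller conditioning sets that exclude the active index; positivity of every conditioning event follows by monotonicity from the hypothesis $\Pr\bigl(\bigcap_{j\in S}\overline{E_j}\bigr)>0$; the numeric step $p\le\tfrac{1}{e(d+1)}\le x(1-x)^d$ reduces to $(1+1/d)^d\le e$, which you justify via $1+t\le e^t$; and you correctly note that $d=0$ must be handled separately since $x=1$ makes the product bound vacuous. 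One tiny stylistic point: you do not actually need ``the same bound'' to see that the denominator's conditioning events are non-null --- monotonicity alone ($\bigcap_{m<\ell}\overline{E_{j_m}}\cap B\supseteq\bigcap_{j\in S}\overline{E_j}$) already gives it --- but this does not affect correctness.
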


We consider \emph{distributed} algorithmic variants of LLL. The basic framework is as follows. Let $\mathcal{X} = \{X_1, \ldots, X_m\}$ be a set of mutually independent random variables and assume that each $E_i$ depends only on variables in $\mathcal{X}$; denote by $\vbl(E_i) \subseteq \mathcal{X}$ the subset of variables that event $E_i$ depends on. Form the \emph{dependency graph} $G_\mathcal{E} = (\mathcal{E}, \mathcal{D})$, where $\mathcal{D} = \set{\set{E_i,E_j}}{\vbl(E_i) \cap \vbl(E_j) \neq \emptyset}$. Now consider a distributed system in which the communication network is identical to the graph~$G_\mathcal{E}$: each node of the system is associated with a bad event $E \in \mathcal{E}$, and two nodes are adjacent if and only if their associated events depend on at least one common variable. The task is for each node to find an assignment to its variables $\vbl(E)$ such that adjacent nodes agree on the values of their common variables and all the events in $\mathcal{E}$ are avoided.

We use the standard $\local$ model of distributed computing~\cite{linial92locality,peleg00distributed}. Initially each node is only aware of its own part of the input, but the nodes can exchange messages to learn more about the structure of the problem instance. Eventually, each node has to stop and output its own part of the variable assignment. Communication takes place in synchronous communication rounds, and the running time is defined to be equal to the number of communication rounds. 
Following the common practice, we say that an event occurs with high probability if it occurs with probability at least $1 - 1/n^c$, where $c$ is an arbitrarily large constant.
We consider randomised Monte Carlo algorithms, where all bad events are avoided with high probability and the running time is deterministic (more precisely, some function of $n$).

\subsection{Main Result and Key Techniques}

We prove the following lower bound for LLL algorithms in the $\local$ model: any randomised Monte Carlo  algorithm that produces a correct solution with high probability requires $\Omega(\log \log n)$ communication rounds, even if we restrict our input instances to $d$-regular graphs with $d = O(1)$. This is a substantial improvement over the lower bound of $\Omega(\log^* n)$ from prior work~\cite{chung14distributed,linial92locality,naor91lower}.

To derive the lower bound, we introduce two new graph problems that are closely related to each other: \emph{sinkless orientation} and \emph{sinkless colouring} (see \sectionref{sec:prelim}). Then we proceed as follows:
\begin{enumerate}
    \item We show that any Monte Carlo distributed algorithm for LLL implies a Monte Carlo distributed algorithm for sinkless orientation in $3$-regular graphs, with asymptotically the same running time (see \sectionref{sec:reduction}).
    \item We show that any Monte Carlo distributed algorithm for sinkless orientation in $3$-regular graphs has a running time of $\Omega(\log \log n)$.
\end{enumerate}
For the second step, we study the sinkless orientation problem in high-girth graphs. The key ingredient is a \emph{mutual speedup lemma} (see \sectionref{sec:speedup}) that holds in graphs of girth larger than $2t+1$:
\begin{enumerate}[noitemsep]
    \item If we can find a sinkless colouring in $t$ rounds, we can find a sinkless orientation in $t$ rounds.
    \item If we can find a sinkless orientation in $t$ rounds, we can find a sinkless colouring in $t-1$ rounds.
\end{enumerate}
By iterating the mutual speedup lemma, we can then obtain an algorithm for finding a sinkless orientation in high-girth graphs with a running time of $0$ rounds, which is absurd. The mutual speedup lemma amplifies the failure probability, but not too much---if the original algorithm works with high probability, we can still reach the contradiction after $o(\log \log n)$ iterations.

As a by-product, we also obtain a lower bound for $d$-colouring $d$-regular high-girth graphs: any proper node colouring with $d$ colours is also a sinkless colouring (while the converse is not true).

Our lower-bound proof does not make use of the full power of LLL---it also holds if we replace the usual assumption of $ep(d+1) \le 1$ with, e.g., the classical formulation~\cite{erdos75local} of Erd\H{o}s and Lov\'asz where the assumption is $4pd \leq 1$. In addition, our bound holds for the symmetric version of LLL, and therefore, it trivially applies to the asymmetric LLL~\cite{alon08probabilistic} as well.

\subsection{Related Work}

The celebrated \lllemma{} was first introduced in 1975~\cite{erdos75local} and has since then found applications in proving the existence of various combinatorial structures~\cite{alon08probabilistic,mitzenmacher05probability,molloy02colouring}. However, the original proof was non-constructive, and thus, did not yield an efficient (centralised) algorithm for \emph{finding} such a structure.

Beck~\cite{beck91algorithmic} showed that constructive versions of the local lemma do exist, albeit with weaker guarantees: there exists a deterministic algorithm that finds a satisfying assignment to a certain variant of LLL in polynomial time. This breakthrough result stimulated a long line of research in devising new \emph{algorithmic} versions of the local lemma with more general conditions and better performance~\cite{alon91parallel,molloy98further,czumaj00algorithmic,srinivasan08improved,moser09constructive,moser10constructive,haeupler10constructive,chandrasekaran13deterministic}. The algorithmic LLL has found numerous applications e.g.\ in the context of colouring, scheduling, and satisfiability problems~\cite{molloy02colouring,czumaj00algorithmic,elkin15matching,chung14distributed,moser09constructive,czumaj00coloring,leighton99fast,srinivasan06extension}.

A key breakthrough was the result by Moser and Tardos~\cite{moser10constructive}: they showed that even a very general form of the local lemma has a constructive counterpart; a natural resampling algorithm finds a satisfying assignment efficiently. Moser and Tardos also gave a parallel variant of this algorithm which can easily be implemented in a distributed setting as well. Indeed, already Alon~\cite{alon91parallel} observed that LLL admits parallelism by showing how to parallelise Beck's original approach~\cite{beck91algorithmic}. Subsequently, many papers have also considered how to attain efficient parallel and distributed algorithms for LLL~\cite{moser10constructive,chandrasekaran13deterministic,chung14distributed,haeupler15improved}.

The algorithmic framework of Moser and Tardos~\cite{moser10constructive} is based on an iterative random sampling method. The idea is to start with a random assignment and while a violated constraint exists, the algorithm then iteratively resamples variables in some violated constraint. Resampling is continued until no more violated constraints exist. The algorithm is easy to parallelise by noting that one can resample variables in \emph{independent} constraints, that is, in constraints that do not share variables. Now it suffices to pick an independent set in the subgraph of the dependency graph induced by the violated constraints and resample variables related to these constraints. Moser and Tardos use Luby's algorithm~\cite{luby86simple} to find a maximal independent set in $O(\log n)$ rounds in each resampling iteration. In total, this algorithm requires $O(\log n)$ resampling iterations thus leading to a total running time of $O(\log^2 n)$ rounds in the distributed setting.

One approach for speeding up this basic algorithm is to use faster algorithms for computing the independent sets. For example, in constant-degree graphs, a maximal independent set can be found in $\Theta(\log^* n)$ rounds~\cite{linial92locality}. More generally in low-degree graphs, maximal independent sets can be found in $O(d + \log^* n)$ rounds~\cite{barenboim14distributed} and $O(\log d \cdot \sqrt{\log n})$ rounds~\cite{barenboim12locality}, thus making it possible to solve LLL in $o(\log^2n)$ rounds when the degrees are small. However, this approach has an inherent barrier: there exist graphs, where any algorithm that finds a maximal independent set needs $\Omega(\min\{ \log d, \sqrt{\log n}\})$ rounds~\cite{kuhn10local}. 

As pointed out by Moser and Tardos~\cite{moser10constructive}, it is not necessary to find a \emph{maximal} independent set, but a large independent set suffices. Following this idea, Chung et al.~\cite{chung14distributed} gave a distributed algorithm where they instead compute so-called \emph{weakly maximal} independent sets, where the probability that a node is not in the produced independent set $S$ or neighbouring a node in set $S$ is bounded by $1/\mypoly(d)$. They showed that this can be done in $O(\log^2 d)$ rounds, thus the dependency on $n$ in the total running time of the LLL algorithm is only $O(\log n)$. Recently, Ghaffari improved this further by showing that weakly maximal independent sets can be computed in $O(\log d)$ rounds~\cite{ghaffari15mis}. 

If one is interested in weaker forms of LLL, Chung et al.~\cite{chung14distributed} also provide faster algorithms running in $O(\log n / \log \log n)$ rounds for the LLL criteria $pf(d) < 1$, where $f(d)$ is an exponential function.

While there are numerous positive results, only a few lower bounds for LLL are known. Moser and Tardos point out that in their resampling approach, $\Omega(\log_{1/p} n)$ expected iterations of resampling are needed. Recently, Haeupler and Harris \cite{haeupler15improved} conjectured that parallel resampling algorithms need $\Omega(\log^2 n)$ time.

In the distributed setting, Chung et al.~\cite{chung14distributed} show an unconditional lower bound showing that essentially any distributed LLL algorithm takes $\Omega(\log^* n)$ rounds. This bound follows from the fact that LLL can be used to properly colour a ring using only a constant number of colours, which is known to take $\Omega(\log^* n)$ rounds~\cite{linial92locality,naor91lower}.

\section{Preliminaries}\label{sec:prelim}

Let $G = (V,E)$ be a simple graph. An \emph{orientation} $\sigma$ of a graph $G$ assigns a direction $\sigma(\{u,v\}) \in \{u \rightarrow v, u \leftarrow v\}$ for each edge $\{u,v\} \in E$. For convenience, we write $(u,v) \in \sigma(E)$ to denote an edge $\{u,v\} \in E$ oriented $u \rightarrow v$ by $\sigma$. For all $v \in V$ we define $\indeg(v, \sigma) = |\{u : (u,v) \in \sigma(E) \}|$ as the number of incoming edges, $\outdeg(v, \sigma) = |\{u : (v,u) \in \sigma(E) \}|$ as the number of outgoing edges, and $\deg(v) = \indeg(v, \sigma) + \outdeg(v, \sigma)$ as the degree of $v$. Graph $G$ is $d$-regular if for all $v \in V$ we have $\deg(v) = d$.

A node $v$ with $\indeg(v, \sigma) = \deg(v)$ is called a \emph{sink}. We call an orientation $\sigma$ \emph{sinkless} if no node is a sink, that is, every node $v$ has $\outdeg(v, \sigma) > 0$.

\subsection{Colourings}

In the following, for any integer $k>0$ we write $[k] = \{0,1, \ldots, k-1\}$. A function $\varphi \colon V \to [\chi]$ is a \emph{proper node $\chi$-colouring} if for all $\{u,v\} \in E$ we have $\varphi(u) \neq \varphi(v)$. We say that $\psi \colon E \to [\chi]$ is a \emph{proper edge $\chi$-colouring} if for all $e,e' \in E$, where $e \neq e'$, it holds that $e \cap e' \neq \emptyset \Rightarrow \psi(e) \neq \psi(e')$. That is, any two adjacent edges have a different colour.

Given a properly edge $\chi$-coloured graph $G = (V, E, \psi)$, we call $\varphi \colon V \to [\chi]$ a \emph{sinkless colouring} of $G$ if for all edges $e = \{u,v\} \in E$ it holds that $\varphi(u) = \psi(e) \Rightarrow \varphi(v) \neq \psi(e)$. Put otherwise, $\varphi$ is a sinkless colouring if it does not contain a \emph{forbidden configuration}, where $\varphi(u) = \varphi(v) = \psi(e)$ for some edge $e = \{u,v\} \in E$. Note that a sinkless colouring is not necessarily a proper node colouring. The name ``sinkless colouring'' refers to its close relation to sinkless orientations (see \sectionref{ssec:sinkless-definitions}).

\subsection{Model of Computation}

In this work, we consider the $\local$ model of distributed computing~\cite{linial92locality,peleg00distributed}. In this framework, we have a simple connected undirected graph $G=(V,E)$ that serves both as a communication network and as the problem instance. Each node $v \in V$ is a computational unit, edges denote direct communication links between nodes, and all nodes in the system execute the same algorithm~$A$.

Initially each node~$v$ knows the total number of nodes $n$, the maximum degree of the graph~$\Delta$, and possibly a task-specific local input. Computation proceeds in synchronous rounds. In each round, every node performs the following three steps:
\begin{enumerate}[noitemsep]
  \item send a message to each neighbour,
  \item receive a message from each neighbour,
  \item perform local computation.
\end{enumerate}
After the final round each node announces its own local output, that is, its own part in the solution. We do not bound the local computation performed by nodes each round in any way or the size of the messages sent. In particular, nodes can send \emph{infinitely} long messages to their neighbours in a single round. The running time of an algorithm is defined to be the number of communication rounds needed for all nodes to announce their local output.

In the case of \emph{randomised} algorithms, we assume that each node can toss a countably infinite number of random coins, or equivalently, is provided with a real number $x(v)$ taken uniformly at random from the interval $[0,1]$. 
Note that $x(v)$ provides a globally unique identifier with probability 1 and it can be used to obtain a globally unique $O(\log n)$-bit identifier with high probability.

We emphasise that while some of our assumptions may not be realistic, they only make the lower-bound result stronger.

\subsection{Local Neighbourhoods}

We denote the radius-$t$ neighbourhood of a node~$u$ by
$
N^t(u) = \set{v \in V}{\dist(u,v) \le t}
$,
where $\dist(u,v)$ is the length of the shortest path between $u$ and $v$. Note that in $t$ rounds, any node $u$ can only gather information from $t$ hops away, and hence, has to decide its output based solely on $N^t(u)$. Thus, any distributed $t$-time algorithm can be considered as a function that maps the radius-$t$ neighbourhoods to output values.

It is often convenient to consider the \emph{edges} instead of the nodes as active entities, that is, every edge outputs e.g.\ its own orientation. Hence, we analogously define the radius-$t$ neighbourhood of an \emph{edge} $\{u,v\}$ as $N^t(\{u,v\}) = N^t(u) \cap N^t(v)$. Now for any algorithm that runs in time $t$, the output of an edge $\{u,v\} \in E$ can only depend on $N^t(\{u,v\})$.

\subsection{Distributed Sinkless Orientation and Sinkless Colouring}\label{ssec:sinkless-definitions}

In order to prove our lower bound, we consider the tasks of finding a sinkless orientation and a sinkless colouring in $d$-regular graphs with distributed algorithms. In particular, we will show that solving these problems is hard even in the case where we are given an edge $d$-colouring. In the following, let $G = (V, E, \psi)$ denote our input graph, where $\psi$ is a proper edge $d$-colouring; see \figureref{fig:problems} for illustrations.

\begin{figure}
\centering
\includegraphics[page=1]{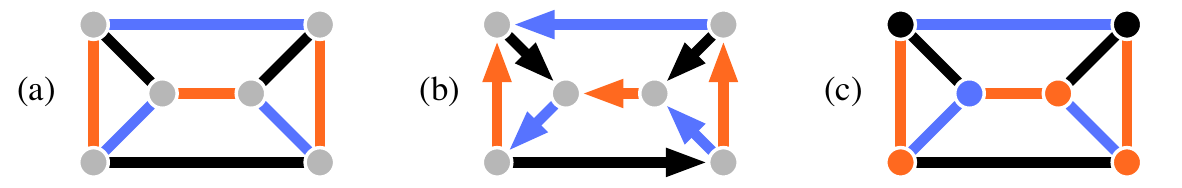}
\caption{(a)~A $3$-regular edge $3$-coloured graph. (b)~A sinkless orientation. (c)~A sinkless colouring. Note that the solutions (b) and (c) are closely related: if the colour of a node is $c$ in Figure~(c), then its incident edge of colour $c$ is one of its out-edges in Figure~(b).}\label{fig:problems}
\end{figure}

\begin{problem}[Sinkless colouring]
Given an edge $d$-coloured $d$-regular graph $G = (V, E, \psi)$, find a sinkless colouring $\varphi$. That is, compute a colouring $\varphi$ such that for no edge $e = \{u,v\} \in E$ we have $\varphi(u) = \varphi(v) = \psi(e)$. 
\end{problem}

In the sinkless colouring problem, each node $v \in V$ only outputs its own colour $\varphi(v)$ in the computed colouring. 

\begin{problem}[Sinkless orientation]
 Given an edge $d$-coloured $d$-regular graph $G = (V, E, \psi)$, find a sinkless orientation. That is, compute an orientation $\sigma$ such that $\outdeg(v, \sigma) > 0$ for all $v \in V$.
\end{problem}

Note that in the sinkless orientation problem, the output relates to \emph{edges} instead of nodes. Therefore, we require that for an edge $e = \{u,v\}$ both endpoints $u$ and $v$ agree on the orientation (i.e., either $u \rightarrow v$ or $u \leftarrow v$) and output the same value $\sigma(e)$ with probability 1. Put otherwise, $u$ and $v$ have to break symmetry in order not to have the nodes trying to orient the edge in a conflicting manner, e.g.\ outwards from themselves. In the case of randomised algorithms, the random input value $x(v) \in [0,1]$ breaks symmetry with probability 1; the event $x(v) = x(u)$ occurs with probability 0, so we simply ignore this case for the remainder of this paper.

Finally, note that the problems of sinkless colouring and sinkless orientation are closely related (see \figureref{fig:problems}). Given a sinkless colouring $\varphi$, node $u$ can orient the edge $\{u,v\}$ with colour $\varphi(u)$ towards $v$ and inform $v$ of this in one communication round; edges that are still unoriented can be oriented arbitrarily. This produces a sinkless orientation. On the other hand, given a sinkless orientation $\sigma$ we can compute a sinkless colouring $\varphi$ as follows: node $u$ outputs the smallest colour $\psi(e)$, where $e = \{u,v\}$ is an outgoing edge, that is, $\sigma(e) = u \rightarrow v$. Hence we have the following trivial observations:
\begin{enumerate}[noitemsep]
    \item\label{step:triva} If we can find a sinkless orientation in $t$ rounds, we can find a sinkless colouring in $t$ rounds.
    \item\label{step:trivb} If we can find a sinkless colouring in $t$ rounds, we can find a sinkless orientation in $t+1$ rounds.
\end{enumerate}
The mutual speedup lemma (see \sectionref{sec:speedup}) shows that we can save $1$ communication round in both steps \ref{step:triva} and \ref{step:trivb}, at least in high-girth graphs.

\subsection{Distributed \LLLemma}\label{sec:distlll}

Let $\mathcal{X}$ be the set of random variables and $\mathcal{E} = \{ E_1, \ldots, E_n \}$ be the set of events as in \theoremref{thm:lll} in \sectionref{ssec:dist-lll}. Denote by $\vbl(E_k) \subseteq \mathcal{X}$ the subset of variables that event $E_k \in \mathcal{E}$ depends on and the dependency graph by $G_\mathcal{E} = (\mathcal{E}, \mathcal{D})$, where $\mathcal{D} = \set{\set{E_i,E_j}}{\vbl(E_i) \cap \vbl(E_j) \neq \emptyset}$. 

\begin{problem}[Distributed \lllemma]
 Let the dependency graph $G_\mathcal{E} = (\mathcal{E}, \mathcal{D})$ be the communication graph, where each node $v$ corresponds to an event $E_v \in \mathcal{E}$ and knows the set $\vbl(E_v)$. The task is to have each node output an assignment $a_v$ of the variables $\vbl(E_v)$ such that
\begin{enumerate}[noitemsep]
 \item for any $\{E_u, E_v\} \in \mathcal{D}$ and $X \in \vbl(E_u) \cap \vbl(E_v)$ it holds that $a_u(X) = a_v(X)$, and
 \item the event $E_v$ does not occur under assignment $a_v$.
\end{enumerate}
\end{problem}

To make the lower-bound results as widely applicable as possible, we consider the \emph{explicit, finite version} of the LLL problem: the random variables are discrete variables with a finite range, and for each event $E_v$, node $v$ has access to an explicit specification of all combinations of the variables $\vbl(E_v)$ for which $E_v$ occurs. In particular, we do not need to assume that the events are black boxes.

\section{From LLL to Sinkless Orientation}\label{sec:reduction}

In this section, we reduce the sinkless orientation problem to the distributed \lllemma{}. More specifically, we show the following:

\begin{theorem}
  Let $f\colon \N \to \Real$ be such that $f(4) \leq 16$. Let $A$ be a Monte Carlo distributed algorithm for LLL such that $A$ finds an assignment avoiding all the bad events under the LLL criteria $pf(d) \leq 1$ in time $T$ for some $T\colon \N \to \N$. Then there is a Monte Carlo distributed algorithm~$B$ that finds a sinkless orientation in $3$-regular graphs of girth at least 5 in time $O(T)$.
  \label{thm:sotolll}
\end{theorem}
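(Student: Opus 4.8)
The plan is to construct, from a $3$-regular edge $3$-coloured input graph $G = (V,E,\psi)$ of girth at least $5$, an instance of the explicit finite LLL problem whose dependency graph is (isomorphic to) $G$ itself, and whose satisfying assignments encode sinkless orientations of $G$. The natural choice is to put one random variable $X_e$ on each edge $e$, taking values in $\{u \rightarrow v, u \leftarrow v\}$ uniformly and independently, and to associate to each node $v$ the bad event $E_v$ that all three incident edges point into $v$, i.e.\ that $v$ is a sink. Then $\vbl(E_v)$ is exactly the set of three edges incident to $v$, so two events share a variable precisely when the corresponding nodes share an edge; hence the dependency graph is $G$, which is $d$-regular with $d = 3$. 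Each event has probability $p = 1/8$, and the explicit specification is trivial (one forbidden combination out of eight). With $f(4) \le 16$ we get $pf(d) = pf(3) \le pf(4) \le 16/8 = 2$, which unfortunately is not $\le 1$ — so a naive single-edge encoding is too weak, and this is where the real work lies.

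The fix I expect the paper to use is \emph{amplification by bundling}: replace each edge by a bundle of $k$ parallel random bits (equivalently, let $X_e$ be a uniform $k$-bit string for a suitable constant $k$), and declare the edge ``oriented towards $v$'' only under a sub-event of probability much smaller than $1/2$; concretely, fix an encoding so that each of the three incident edges of $v$ fails to help $v$ with probability $q$, and the bad event $E_v$ ``no incident edge is oriented outward from $v$'' then has probability roughly $q^3$. One must be slightly careful that the two endpoints of an edge deterministically agree on its orientation from the shared variable $X_e$ (they do, since $X_e$ is common to both $\vbl(E_u)$ and $\vbl(E_v)$ and the decoding rule is a fixed function of $X_e$), and that the decoding is consistent — e.g.\ the edge is oriented $u \to v$ iff the string $X_e$ lies in a designated ``towards $v$'' region, $v \to u$ otherwise, with the two regions partitioning the value space. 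Choosing the block length $k$ a large enough constant makes $p = q^3$ small enough that $pf(3) \le pf(4) \le 16 q^3 \le 1$; since $f$ is only assumed monotone-ish via $f(4) \le 16$ and we only ever evaluate it at $d = 3 \le 4$, bounding $f(3) \le f(4)$ — or rather using $pf(d)\le 1$ as a hypothesis to be \emph{satisfied} by our instance — is all we need. Here the girth-$5$ assumption guarantees $G$ is simple with no multi-edges or triangles, so the ``edge $3$-colouring'' and the bundle construction are well defined and the dependency graph has the claimed structure.

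Having built the LLL instance, I run algorithm $A$ on it. Each node of $G$ can simulate its corresponding LLL node locally: it knows its incident edges, the edge colouring $\psi$ (used only to make the construction canonical and to let both endpoints agree on roles), and can sample the bundled variables using its random bits $x(v)$ — say the endpoint of $e$ with the smaller identifier owns and broadcasts $X_e$ in one extra round, or more cleanly both endpoints derive $X_e$ from a shared source; either way the overhead is $O(1)$ rounds. After $T$ rounds $A$ outputs, with high probability, an assignment avoiding every $E_v$; decoding each $X_e$ gives an orientation $\sigma$ of $G$, and ``$E_v$ does not occur'' says exactly that $v$ has at least one outgoing edge, i.e.\ $\sigma$ is sinkless. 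The endpoints agree on $\sigma(e)$ with probability $1$ because the decoding is a deterministic function of the shared $X_e$. Total running time is $T + O(1) = O(T)$, and the failure probability is inherited from $A$, so $B$ is a Monte Carlo sinkless-orientation algorithm with the required guarantee. The main obstacle is purely the parameter bookkeeping in the amplification step — ensuring the probability bound $pf(d) \le 1$ actually holds for the constructed instance while keeping $d = 3$ and the per-edge encoding a constant size — and once the block length is fixed everything else is a routine local simulation argument.
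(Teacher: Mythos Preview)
Your amplification step is the crux, and it does not work as written. You propose to replace each edge variable $X_e$ by a uniform $k$-bit string and ``declare the edge oriented towards $v$ only under a sub-event of probability much smaller than $1/2$''; but in the very next breath you (correctly) insist that the two regions partition the value space, so if $\Pr[\sigma(X_e)=u\to v]=q$ then $\Pr[\sigma(X_e)=v\to u]=1-q$. You cannot make both of these small simultaneously. Consequently, for every edge $e=\{u,v\}$ the contribution to $\Pr[E_u]\cdot\Pr[E_v]$ is $q(1-q)\le 1/4$, and in fact with a symmetric decoding one is stuck at $\Pr[E_v]=1/8$ for all $v$, exactly as in the naive construction. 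No choice of block length $k$ changes this, because the bad event depends only on the decoded binary orientation, not on the underlying bits. There is also a second, independent gap: the hypothesis gives you only $f(4)\le 16$, with no monotonicity, so the step ``$pf(3)\le pf(4)$'' is unjustified; in a $3$-regular dependency graph you would need control over $f(3)$, which you do not have.

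The paper resolves both issues at once by changing the \emph{graph} rather than the variables. Using the edge $3$-colouring, it contracts every edge of colour $2$; the girth-$5$ assumption guarantees the resulting graph $G'$ is simple and $4$-regular (two contracted pairs sharing two edges of colour $<2$ would force a short cycle in $G$). On $G'$ the naive LLL instance already works: one binary variable per edge, bad event ``all four incident edges point in'', so $p=1/16$, $d=4$, and the criterion $pf(4)\le 1$ holds exactly by hypothesis. One then simulates $A$ on $G'$ inside $G$ with $O(1)$ overhead per round (adjacent virtual nodes are at distance $\le 3$ in $G$), obtains a sinkless orientation of $G'$, and finally orients each contracted colour-$2$ edge greedily so that neither endpoint becomes a sink. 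Your proposal stays in the $3$-regular graph and tries to push $p$ down; the paper instead pushes $d$ up to $4$ so that the given bound $f(4)\le 16$ becomes directly usable.
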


The case for 4-regular graphs will be almost immediate. The interesting case will be 3-regular graphs, for which we will show a reduction to the 4-regular case.

\subsection{Sinkless Orientation in 4-regular Graphs}

In the sinkless orientation problem, the output is an orientation for each edge $e \in E$ in a graph $G=(V,E)$---we will not need an edge colouring in this case. Thus, we associate each node $v \in V$ with $\deg(v)$ variables, one for each of its edges. Put otherwise, we have $\vbl(E_v) = \set{X_e}{v \in e}$ for each $v \in V$. That is, two events $E_u$ and $E_v$ share a variable if and only if $\set{u,v} \in E$. For each edge $e = \set{u,v} \in E$, the corresponding variable $X_e$ ranges over $\set{u \rightarrow v, u \leftarrow v}$. Since the task is to find a \emph{sinkless} orientation, the bad event $E_v$ occurs exactly when for all neighbours $u$ of $v$ the variable $X_{\set{v,u}}$ takes the value $u \rightarrow v$. Now our setting is as described in \sectionref{sec:distlll}: the dependency graph~$G_{\Ep} = (\mathcal{E}, \mathcal{D})$ is isomorphic to the sinkless orientation problem instance~$G=(V,E)$, which also acts as the communication graph.

Consider now finding a sinkless orientation in a 4-regular graph. If the mutually independent random variables $X_e$ are sampled uniformly at random, we have
$
  \Pr[E_v] = 1/2^4 = 1/16
$
for each $v \in V$. 
Let $p = 1/16$ and $d = 4$. Now $\Pr(E_v) \leq p$ and $E_v$ depends on $d$ other events for each $v \in V$, and the condition $pf(d) \leq 1$ holds, given $f(4) \leq 16$.

Let $A$ be the algorithm from the statement of \theoremref{thm:sotolll} and let $a_v$ be the assignment produced by $A$ for each $v \in V$. By assumption, we have $a_v(X_e) = a_u(X_e)$ for all $e = \set{u,v} \in E$ and none of the events $E_v$ occur. Define an orientation $\sigma$ of $G$ by setting $\sigma(e) = a_v(X_e)$, where $v \in e$, for each $e \in E$. By the definition of the events $E_v$, the orientation $\sigma$ is sinkless. We have shown that $A$ finds a sinkless orientation in 4-regular graphs in time $T$. Note that we did not make use of the edge colouring in any way; this will be crucial in the next step.

\subsection{Sinkless Orientation in 3-regular Graphs}

Unfortunately, the \lllemma{} is not directly applicable in the case of 3-regular sinkless orientation, since the probability of bad events will be $p = 1/2^3 = 1/8$ and thus $ep(d+1) > 1$. We will circumvent this by reducing the 3-regular case to the 4-regular case.

The key observation is that a pair of adjacent degree-3 nodes can act as a single virtual degree-4 node: if we contract two adjacent nodes into one by their common edge, the new node has 4 edges; see \figureref{fig:contract}. If the degree-4 node is not a sink in an orientation, we can direct the contracted edge in the original graph such that neither of the original nodes is a sink.

\begin{figure}
  \centering
  \includegraphics[page=2]{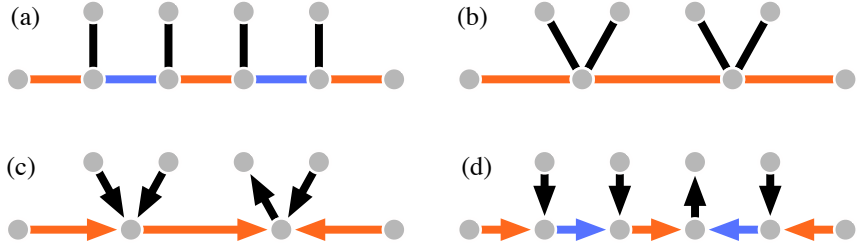}
  \caption{(a)~Part of a high-girth $3$-regular edge $3$-coloured graph. (b)~A $4$-regular graph obtained by contracting all blue edges---note that the graph is not properly edge coloured. (c)~With distributed LLL we can find a sinkless orientation in the $4$-regular graph. (d)~We can now orient each blue edge greedily to obtain a sinkless orientation of the original graph.}\label{fig:contract}
\end{figure}

The proper edge 3-colouring given as input can be used to form the node pairs. Since each node has exactly one incident edge of each colour, it follows that the edges of each colour class constitute a perfect matching. Thus, we can choose to contract all edges of colour 2 to obtain a 4-regular graph. Our algorithm for the 3-regular case works by first simulating the algorithm described in the previous section in the ``virtual'' 4-regular graph and then translating the obtained sinkless orientation to the actual 3-regular graph. We do not get a proper edge colouring for the 4-regular graph, but as stated previously, it is not necessary.

Let $G = (V,E,\psi)$ be an edge 3-coloured 3-regular graph with a girth at least 5. Denote by $G' = (V',E')$ the graph for which
\begin{align*}
 V' &= \set{\set{u,v} \in E}{\psi(\set{u,v}) = 2}, \\
 E' &= \set{\set{a,b}}{a \neq b \text{ and } \set{u,v} \in E \text{ for some } u \in a, v \in b, \text{ and } \psi(\set{u,v}) < 2}.
\end{align*}
The following lemma shows that graph~$G'$ is an eligible problem instance for algorithm~$A$ of the previous section.

\begin{lemma}\label{lem:4reg}
  The graph $G'=(V',E')$ is simple and 4-regular. Furthermore, for each edge $\set{a,b} \in E'$ there is only one edge $\set{u,v} \in E$ with $u \in a, v \in b$ and $\psi(\set{u,v}) < 2$, and vice versa.
\end{lemma}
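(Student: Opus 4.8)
The plan is to unpack the definitions of $V'$ and $E'$ and verify each claimed property in turn, with the girth hypothesis doing all the real work. First I would establish the bijection between $E'$ and the colour-$\{0,1\}$ edges of $G$: given an edge $\{u,v\} \in E$ with $\psi(\{u,v\}) < 2$, both $u$ and $v$ lie on a unique colour-$2$ edge (since the colour classes are perfect matchings), say $u \in a$ and $v \in b$ with $a,b \in V'$; I must check $a \neq b$, which holds because $a = b$ would force $u$ and $v$ to be the two endpoints of a single colour-$2$ edge, contradicting $\psi(\{u,v\}) < 2$ in a simple graph. This gives a well-defined map from colour-$<2$ edges of $G$ to pairs $\{a,b\}$, and the definition of $E'$ says every element of $E'$ arises this way. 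So the map is onto $E'$; the content of the ``vice versa'' in the lemma is that it is also injective, equivalently that for each $\{a,b\} \in E'$ there is only one colour-$<2$ edge between $a$ and $b$.

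For that uniqueness claim I would argue by contradiction: suppose $\{u_1,v_1\}$ and $\{u_2,v_2\}$ are two distinct colour-$<2$ edges of $G$ with $u_1,u_2 \in a$ and $v_1,v_2 \in b$, where $a = \{x,x'\}$ and $b = \{y,y'\}$ are colour-$2$ edges. Then these two edges together with the (at most two) colour-$2$ edges $a$ and $b$ form a short closed walk in $G$. Concretely, one gets a cycle of length at most $4$: either the two colour-$<2$ edges share a vertex in $a$ (length-$3$ cycle using one vertex of $a$ and edge $b$ plus the two colour-$<2$ edges — here I should double-check the exact configuration), or they are vertex-disjoint in $a$ and one closes a $4$-cycle $u_1 v_1 v_2 u_2 u_1$ using edges $a$ and $b$. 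In every case the length is at most $4 < 5$, contradicting girth at least $5$. So the map is a bijection, proving both the ``for each edge'' statement and its converse.

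Given the bijection, $4$-regularity is a short count: a vertex $a = \{x,x'\} \in V'$ is an endpoint in $E'$ of exactly the pairs coming from colour-$<2$ edges incident to $x$ or to $x'$; each of $x,x'$ has exactly one colour-$0$ and one colour-$1$ edge, giving four colour-$<2$ edges total, and these yield four distinct edges of $E'$ by injectivity (and none is a loop since we showed $a \neq b$). Hence $\deg_{G'}(a) = 4$. Simplicity means no loops (already handled: $a \neq b$) and no multi-edges, which is exactly the uniqueness statement just proved. So once the girth argument is in place, everything else is bookkeeping.

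The main obstacle is the case analysis in the girth argument: I need to be careful that a ``colour-$<2$ edge'' cannot coincide with a colour-$2$ edge (true, since $2 \not< 2$), that the closed walk I extract is actually a cycle (no repeated vertices) of length $3$ or $4$, and that I correctly handle the degenerate possibilities where $u_1 = u_2$ or $v_1 = v_2$ — e.g. if $u_1 = u_2$ but $v_1 \neq v_2$, then $u_1$ has two colour-$<2$ edges to $b$'s two vertices, and together with edge $b$ this is a triangle. Enumerating these subcases cleanly, and confirming each produces a cycle of length $\le 4$, is where the proof needs attention; the rest follows mechanically from the perfect-matching property of edge colour classes.
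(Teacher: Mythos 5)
Your proposal is correct and takes essentially the same approach as the paper: both invoke the girth~$\ge 5$ hypothesis to exclude collisions (short cycles), and both derive 4-regularity and the bijection between $E'$ and the colour-$\{0,1\}$ edges from that one observation. The only difference is organisational---you establish the bijection first and read off $\deg_{G'}(a)=4$ from it, whereas the paper counts $\deg_{G'}(a)=4$ directly and then notes the bijection follows; the girth argument doing the work is the same in both.
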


\begin{proof}
  The fact that $G'$ is simple follows directly from the definition. Suppose then that $a = \set{u,v} \in V'$. As $G$ is a 3-regular edge 3-coloured graph, both $u$ and $v$ have exactly two edges of colour 0 or 1. Let us denote those edges by $e_{u,0} = \set{u,u'}$, $e_{u,1} = \set{u,u''}$, $e_{v,0} = \set{v,v'}$, and $e_{v,1} = \set{v,v''}$. It follows that $\deg_{G'}(a) \leq 4$. Let $b \in V'$, $b' \in V'$, $c \in V'$, and $c' \in V'$ be such that $u' \in b$, $u'' \in b'$, $v' \in c$, and $v'' \in c'$. Now we have $\set{a,b} \in E'$, $\set{a,b'} \in E'$, $\set{a,c} \in E'$, and $\set{a,c'} \in E'$. If $|\set{b,b',c,c'}| < 4$, that is, some of the nodes are equal, it follows that we can find a cycle of length at most 4 in $G$, a contradiction. Hence we have $\deg_{G'}(a) \geq 4$, and consequently, $G'$ is 4-regular. The second claim follows from the definition of $E'$ and the 4-regularity of $G'$.
\end{proof}

\lemmaref{lem:4reg} also implies that we can define a bijective mapping $\ell\colon E' \to \set{e \in E}{\psi(e) < 2}$ by setting $\ell(\set{a,b}) = \set{u,v}$, where $u \in a$ and $v \in b$.

In what follows, we construct a Monte Carlo distributed algorithm $B$ that runs in graph~$G$ and simulates algorithm~$A$ in graph~$G'$. First, we use the random number $x(v)$ of each node $v$ to construct two uniformly chosen random numbers $y(v) \in [0,1]$ and $z(v) \in [0,1]$---for instance, use bits of odd indices for $y(v)$ and bits in even indices for $z(v)$. Then, for each edge $e = \set{u,v} \in E$ with $\psi(e) = 2$ we select a \emph{leader} $L(e) \in e$ by setting $L(e) = \arg\max_{v \in e} y(v)$. We call the other node $w \in e \setminus \set{L(e)}$ a \emph{relay} and write $R(e) = w$. The leader will take care of running the simulation for node $\set{u,v}$ of $G'$, while the task of the relay is just to forward messages. The other random number $z(L(e))$ of each leader $L(e)$ will be given to algorithm $A$ in the simulation.

\begin{lemma}\label{lem:path3}
  If nodes $a$ and $b$ are adjacent in graph~$G'$, there is a path of length at most 3 between nodes $L(a)$ and $L(b)$ in graph~$G$.
\end{lemma}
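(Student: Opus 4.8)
The plan is to unfold the contraction and then do a short case analysis. Suppose $a$ and $b$ are adjacent in $G'$. Since the vertices of $G'$ are exactly the colour-2 edges of $G$, we may write $a = \{u_1, u_2\}$ and $b = \{v_1, v_2\}$ as edges of $G$ with $\psi(a) = \psi(b) = 2$. By \lemmaref{lem:4reg}, there is a \emph{unique} edge $e \in E$ joining $a$ and $b$ with $\psi(e) < 2$; after relabelling the endpoints within each pair we may assume $e = \{u_1, v_1\}$. Thus $G$ contains the three edges $a = \{u_2, u_1\}$, $e = \{u_1, v_1\}$, and $b = \{v_1, v_2\}$, forming a path $u_2 - u_1 - v_1 - v_2$ of length $3$ in $G$.

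Next I would recall that $L(a) \in a = \{u_1, u_2\}$ and $L(b) \in b = \{v_1, v_2\}$, and split into the (up to) four cases according to which endpoint of each pair is the leader. If $L(a) = u_1$ and $L(b) = v_1$, the single edge $e$ is a path of length $1$. If exactly one of $L(a), L(b)$ is the ``far'' endpoint (i.e.\ $L(a) = u_2, L(b) = v_1$, or $L(a) = u_1, L(b) = v_2$), then we obtain a path of length $2$ by prepending $a$ or appending $b$ to $e$. Finally, if $L(a) = u_2$ and $L(b) = v_2$, the full path $u_2 - u_1 - v_1 - v_2$ of length $3$ connects $L(a)$ to $L(b)$. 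In every case the distance between $L(a)$ and $L(b)$ in $G$ is at most $3$, which is the claim.

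There is essentially no hard step here: the only thing worth being careful about is the observation that a vertex of $G'$, being a colour-2 edge of $G$, corresponds to \emph{two} vertices of $G$ that are themselves joined by an edge of $G$, so ``travelling within a contracted node'' costs one hop in $G$; together with the at most two hops contributed by the connecting edge $e$ this is what produces the bound $3$ rather than $1$. The uniqueness part of \lemmaref{lem:4reg} is used only to pin down a single connecting edge $e$, though any choice of such an edge would do.
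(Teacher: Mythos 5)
Your proof is correct and follows essentially the same route as the paper's: both unfold $a$ and $b$ into their underlying colour-2 edges of $G$, pick a connecting light edge to form a 3-path $u_2 - u_1 - v_1 - v_2$, and observe that $L(a)$ and $L(b)$ lie on this path. The paper states this more tersely (a subpath of a length-3 path has length at most 3, so no case split is needed), but your explicit case analysis is just a more verbose version of the same argument.
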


\begin{proof}
  Suppose that $\set{a,b} \in E'$. Then there is $\set{u,v} \in E$ such that $u \in a$ and $v \in b$. There are also $u',v' \in V$ such that $a = \set{u,u'} \in E$ and $b = \set{v,v'} \in E$. Now $(u',\set{u',u},u,\set{u,v},v,\set{v,v'},v')$ is a path of length 3 in $G$ and contains both $L(a) \in \set{u,u'}$ and $L(b) \in\set{v,v'}$. The claim follows.
\end{proof}

\lemmaref{lem:path3} implies that to simulate one communication round of algorithm~$A$ in $G'$, algorithm~$B$ needs at most three communication rounds in $G$. Since we have $|V'| = n/2$, the entire simulation of $A$ takes $3T(n/2) \in O(T(n))$ rounds, where $T(n)$ is the running time of $A$. At the end of the execution of $A$, we have a sinkless orientation~$\sigma'$ of~$G'$. Algorithm~$B$ then produces an orientation~$\sigma$ of $G$ as follows: For each $e \in E$ with $\psi(e) < 2$ set $\sigma(e) = \sigma'(\ell^{-1}(e))$. For each $e \in E$ with $\psi(e) = 2$ set $\sigma(e) = L(e) \rightarrow R(e)$ if an edge of colour 0 or 1 was oriented away from $R(e)$ in the previous step, otherwise set $\sigma(e) = L(e) \leftarrow R(e)$.

\begin{lemma}
  The orientation~$\sigma$ produced by algorithm~$B$ is a sinkless orientation of graph~$G$.
\end{lemma}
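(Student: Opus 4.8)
The plan is to verify that no node of $G$ is a sink under $\sigma$, splitting into two cases according to the colour of the edges incident to a node. Fix an arbitrary node $v \in V$. Every node $v$ lies on exactly one edge of colour $2$; call it $e_2 = \{v,w\}$, and note that $v$ is one of $L(e_2)$ or $R(e_2)$. The two remaining edges at $v$ have colours in $\{0,1\}$ and hence belong to the image of $\ell$, so their orientation under $\sigma$ is copied directly from $\sigma'$ via the bijection $\ell^{-1}$.

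First I would handle the case where $v$ participates in $G'$ as a relay for some colour-$2$ edge, or more precisely the case analysis should be organised around the virtual node $a = \{v,w\} \in V'$ that $v$ belongs to. Let $a = \{v,w\} \in V'$ be the virtual node containing $v$. Since $\sigma'$ is a sinkless orientation of $G'$, we have $\outdeg(a,\sigma') > 0$, so some edge $\{a,b\} \in E'$ is oriented $a \rightarrow b$ by $\sigma'$. By \lemmaref{lem:4reg}, this virtual edge corresponds to a unique real edge $\ell(\{a,b\}) = \{u,u'\} \in E$ with $u \in a$, $\psi(\{u,u'\}) < 2$, and under $\sigma$ this edge is oriented $u \rightarrow u'$ away from $u$. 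Now there are two sub-cases. If $u = v$, then $v$ itself has an outgoing colour-$\{0,1\}$ edge and is not a sink, and we are done. If instead $u = w$ (the other endpoint of $e_2$), then $w$ has a colour-$\{0,1\}$ edge oriented away from it; by the definition of $\sigma$ on colour-$2$ edges, $\sigma(e_2)$ is then oriented from $w$ towards the leader — wait, I must be careful here: the rule orients $e_2$ as $L(e_2)\rightarrow R(e_2)$ precisely when a colour $\{0,1\}$ edge points away from $R(e_2)$. So if $w = R(e_2)$ and $w$ has such an outgoing edge, then $e_2$ is oriented $L(e_2)\rightarrow R(e_2) = L(e_2)\rightarrow w$, which is an outgoing edge for $L(e_2) = v$, so $v$ is not a sink. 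If $w = L(e_2)$, then $v = R(e_2)$; since $w$ has an outgoing colour-$\{0,1\}$ edge, the rule does not force $e_2$ towards $v$, but we must then re-examine whether $v = R(e_2)$ has an outgoing edge. The cleanest framing: the colour-$2$ edge $e_2$ is always oriented towards $R(e_2)$ unless $R(e_2)$ already has an outgoing colour-$\{0,1\}$ edge, in which case $e_2$ is oriented away from $R(e_2)$ towards $L(e_2)$ — and in either case $R(e_2)$ has an outgoing edge (either $e_2$ itself points out, or a colour-$\{0,1\}$ edge does). So $R(e_2)$ is never a sink. It remains to check $L(e_2)$ is never a sink, which is exactly the argument above: $a = \{L(e_2), R(e_2)\}$ has an outgoing virtual edge in $\sigma'$, corresponding to a real colour-$\{0,1\}$ edge oriented away from one of the two endpoints of $e_2$; if it points away from $L(e_2)$ we are done directly, and if it points away from $R(e_2)$ then by the orientation rule $e_2$ is oriented $L(e_2) \rightarrow R(e_2)$, giving $L(e_2)$ an outgoing edge.

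The main obstacle, and the only part requiring genuine care rather than bookkeeping, is getting the direction conventions in the colour-$2$ orientation rule exactly right and confirming that the two statements ``$R(e)$ is never a sink'' and ``$L(e)$ is never a sink'' are each covered: the relay's safety is essentially automatic from the rule's definition (it is a tautology once unpacked), while the leader's safety is where the sinklessness of $\sigma'$ actually gets used, via \lemmaref{lem:4reg} to pull the outgoing virtual edge back to a concrete outgoing real edge at one endpoint of $e_2$. I would write the argument by first proving the relay is never a sink purely from the case split in the definition of $\sigma$ on colour-$2$ edges, and then proving the leader is never a sink using $\outdeg(a,\sigma')>0$ together with the bijection $\ell$; since every node of $G$ is the leader or the relay of its unique incident colour-$2$ edge, this covers all of $V$ and completes the proof.
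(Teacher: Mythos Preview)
Your overall strategy is the same as the paper's: split every node of $G$ according to whether it is the leader or the relay of its unique colour-$2$ edge; the relay case is immediate from the definition of $\sigma$ on colour-$2$ edges, and the leader case uses that $\sigma'$ is sinkless together with the bijection $\ell$ of \lemmaref{lem:4reg}. Your leader argument is a mild variant of the paper's (you start from an arbitrary outgoing virtual edge of $a$ and branch on which endpoint of $e_2$ it sits at, whereas the paper first assumes $e_2$ points into the leader and deduces that both of the relay's colour-$\{0,1\}$ edges are incoming), but both versions are equally short and use the same ingredients.

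There is, however, one concrete error. In your ``cleanest framing'' sentence you state the colour-$2$ rule backwards. The rule in the paper is: $\sigma(e_2)=L(e_2)\rightarrow R(e_2)$ \emph{if} an edge of colour $0$ or $1$ is already oriented away from $R(e_2)$, and $\sigma(e_2)=L(e_2)\leftarrow R(e_2)$ otherwise. Under your inverted version, when $R(e_2)$ has no outgoing colour-$\{0,1\}$ edge, $e_2$ would point \emph{into} $R(e_2)$, making $R(e_2)$ a sink---so your sentence ``in either case $R(e_2)$ has an outgoing edge'' does not follow from what you wrote there. You did apply the rule correctly a few lines earlier (``if $w=R(e_2)$ and $w$ has such an outgoing edge, then $e_2$ is oriented $L(e_2)\rightarrow R(e_2)$''); with that correct reading, both the relay and leader cases go through exactly as you outline, and the proof is complete. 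Since you explicitly flagged the direction conventions as the one place requiring care, fix this reversal before writing up.
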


\begin{proof}
  Consider a node $u \in V$. Let $v \in V$ be such that $a = \set{u,v} \in V'$. Suppose first that $L(\set{u,v}) = u$. If $\sigma(\set{u,v}) = u \rightarrow v$, then $u$ is not a sink. Otherwise, by the definition of orientation~$\sigma$, the edges of colour 0 and 1 incident to $v$ are oriented towards $v$. Denote these edges by $e_0$ and $e_1$, respectively. Now the edges $\ell^{-1}(e_0) \in E'$ and $\ell^{-1}(e_1) \in E'$ are oriented towards node~$a$ in the orientation~$\sigma'$, where $\ell$ is the bijective mapping given by \lemmaref{lem:4reg}. Since $\sigma'$ is assumed to be sinkless, we have $\sigma'(\set{a,b}) = a \rightarrow b$ for some $\set{a,b} \in E'$. Now $\ell(\set{a,b}) = \set{u,u'}$ for some $u' \in b$. By the definition of $\sigma$, we have $\sigma(\set{u,u'}) = u \rightarrow u'$, and hence $u$ is not a sink.

  Suppose then that $L(\set{u,v}) = v$. If $\sigma(\set{u,v}) = u \rightarrow v$, then again $u$ is not a sink. Otherwise it follows from the definition of $\sigma$ that there is an edge of colour 0 or 1 that is oriented away from $u$, and thus $u$ is not a sink. Since $u \in V$ was arbitrary, we have shown that $\sigma$ is a sinkless orientation.
\end{proof}

The output of $v$ according to algorithm~$B$ consists of the value $\sigma(\set{v,u})$ for each neighbour $u$ of~$v$. Electing the leaders as well as constructing the orientation $\sigma$ from $\sigma'$ can clearly be done in a constant number of rounds. Hence, for some constant~$C$, the total running time of $B$ is $O(T(n))+C \subseteq O(T(n))$. This completes the proof of \theoremref{thm:sotolll}.

\section{The Mutual Speedup Lemma \label{sec:speedup}}

In this section we show that if we can find a sinkless colouring in $t$ rounds with failure probability $p$, then it is possible to find a sinkless orientation in $t$ rounds with failure probability roughly $p^{1/3}$ assuming the graph has high girth. Furthermore, if we can find a sinkless orientation in $t$ rounds with failure probability $q$, then it is possible to find a sinkless colouring in $t-1$ rounds with failure probability roughly $q^{1/4}$.

We will assume throughout this section that the input graph is a 3-regular edge 3-coloured graph with a girth larger than $2t+1$. Thus, for each edge $e$, its radius-$(t+1)$ neighbourhood is a tree. In addition, we say that the radius-$t$ neighbourhood $N^t(u)$ of node $u$ is \emph{fixed} when we fix the random input values (i.e.,\ coin flips) for nodes in $N^t(u)$. Similarly, the radius-$t$ neighbourhood of an edge $e = \{u,v\}$ is fixed if all the random values in $N^t(e)$ are fixed. We denote the probability of an event $\Ep$ conditioned on a fixed radius-$t$ neighbourhood of $u$ by $\Pr[\Ep \mid N^t(u)]$, and respectively, for an edge $e$ by $\Pr[\Ep \mid N^t(e)]$.

\subsection{From Sinkless Colouring to Sinkless Orientation\label{ssec:sc-to-so}}

In this section, we assume we are given a randomised sinkless \emph{colouring} algorithm $B$ that runs in $t$ rounds. We use this algorithm to construct a randomised sinkless \emph{orientation} algorithm $B'$ that also runs in $t$ rounds. For shorthand, we write $B(u)$ for the colour that $u$ outputs according to $B$ and $B'(e)$ for the orientation $B'$ outputs for edge $e$.

Consider any node $u \in V$. Algorithm $B'$ consists of three steps. First, node $u$ gathers its radius-$t$ neighbourhood $N^t(u)$ in $t$ communication rounds. Note that $N^t(u)$ contains both the topology (which is locally a tree) \emph{and} the random input values $x(v)$ for all $v \in N^t(u)$. Second, node $u$ computes the set $C(u)$ of \emph{candidate colours} defined as 
\[
 C(u) = \{ \psi(e) : \Pr[ B(u) = \psi(e) \mid N^{t}(e) ] \ge K \textrm{ and } e = \{u,v\}\},
\]
where $K$ is a parameter we will fix later (see \figureref{fig:c2o} for an illustration).
Then, for all of its incident edges $e = \{u,v\}$, node $u$ calculates the probability of node $v$ outputting the colour $\psi(e)$ when executing algorithm $B$ given the radius-$t$ neighbourhood $N^t(e) = N^t(u) \cap N^t(v)$ of edge~$e$.
Thereby, $u$ can determine whether $\psi(e) \in C(v)$.

\begin{figure}
\centering
\includegraphics[page=3]{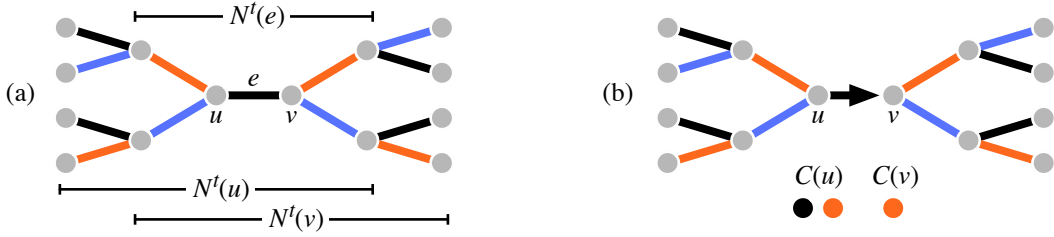}
\caption{From sinkless colouring to sinkless orientation. Here we are given a sinkless colouring algorithm~$B$ with a running time of $t = 2$, and our goal is to construct a sinkless orientation algorithm $B'$ with the same running time. (a)~In algorithm $B$, the colour of node $u$ is determined by the random bits in $N^t(u)$. However, when algorithm $B'$ chooses the orientation of the black edge $e$ it will only look at the random bits in $N^t(e) \subsetneq N^t(u)$. We say that black is a \emph{candidate colour} of $u$ if, based on the information in $N^t(e)$ alone, the probability of node $u$ outputting black in algorithm $B$ is at least $K$. (b)~If black is one of the candidate colours of $u$, and it is not one of the candidate colours of $v$, algorithm $B'$ will orient the edge $u \rightarrow v$.}\label{fig:c2o}
\end{figure}

Finally, we decide the orientation of each edge $e = \{u,v\}$ as follows. In the case $\psi(e) \in C(u) \cap C(v)$ or $\psi(e) \notin C(u) \cup C(v)$, choose the orientation $B'(e)$ of edge $e$ arbitrarily and break any ties using the random coin flips of $u$ and $v$. Otherwise, edge $e$ is oriented according to the following rule:
\[
 B'(e) = \begin{cases}
          u \rightarrow v & \textrm{if } \psi(e) \in C(u) \textrm{ and } \psi(e) \notin C(v), \\
          u \leftarrow v  & \textrm{if }  \psi(e) \notin C(u) \textrm{ and } \psi(e) \in C(v).
         \end{cases}
\]

We will now analyse algorithm $B'$. For each colour $c$, let $e = \{u,v\}$ be an edge incident to $u$ such that $\psi(e) = c$ and define
\[
 A_c(u) = N^{t-1}(v) \setminus N^{t-1}(u).
\]
\begin{definition}[Lucky random bits]\label{def:lucky}
Given a fixed radius-$(t-1)$ neighbourhood $N^{t-1}(u)$ of $u$, we say that the random coin flips in $A_c(u)$ are \emph{lucky} if
\[
 \Pr[B(u) = c \mid N^{t-1}(u) \cup A_c(u)] \ge K
\]
holds, and otherwise that the coin flips are \emph{unlucky}. 
\end{definition}
Observe that $c \in C(u)$ if and only if the random coin flips in $A_c(u)$ are lucky, since $N^{t-1}(u) \cup A_c(u) = N^t(e)$.
Let $\Ep_c$ be the event that the random coin flips in $A_c(u)$ are unlucky, that is, the event that $\Pr[B(u) = c \mid N^{t-1}(u) \cup A_c(u)] < K$ holds. 

\begin{lemma}
 Given any fixed neighbourhood $N^{t-1}(u)$ of node $u$, the set $C(u)$ is empty with probability at most $3K$.
\label{lemma: coloursetnotempty}
\end{lemma}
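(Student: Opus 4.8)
The plan is to reduce the claim to a per-colour estimate and then exploit the tower rule. First note that, since $G$ is $3$-regular and edge $3$-coloured, algorithm $B$ always outputs a colour in $\{0,1,2\}$, and each such colour $c$ is $\psi(e)$ for a unique edge $e=\{u,v\}$ incident to $u$, so $A_c(u)$ and the event $\Ep_c$ are defined for all three colours. By the observation preceding the lemma, $c\in C(u)$ holds exactly when the coin flips in $A_c(u)$ are lucky, i.e.\ $\Ep_c$ is precisely the event $c\notin C(u)$; hence $\{C(u)=\emptyset\}=\Ep_0\cap\Ep_1\cap\Ep_2$, and in particular this event implies $\Ep_{B(u)}$. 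Fixing $N^{t-1}(u)$ throughout, I would therefore write
\[
 \Pr[C(u)=\emptyset \mid N^{t-1}(u)] \;\le\; \Pr[\Ep_{B(u)} \mid N^{t-1}(u)] \;=\; \sum_{c\in\{0,1,2\}} \Pr\bigl[\Ep_c \wedge (B(u)=c) \;\big|\; N^{t-1}(u)\bigr],
\]
so it remains to show each summand is at most $K$.

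For a fixed colour $c$ with incident edge $e=\{u,v\}$, recall that $N^{t-1}(u)\cup A_c(u)=N^t(e)$, and that because the girth exceeds $2t+1$ the ball $N^t(u)$ is a tree and splits as $N^t(u)=N^{t-1}(u)\cup A_0(u)\cup A_1(u)\cup A_2(u)$ with the three sets $A_{c'}(u)$ pairwise disjoint. Consequently, once $N^{t-1}(u)$ is fixed, $B(u)$ depends only on the mutually independent coin flips in $A_0(u),A_1(u),A_2(u)$, and the event $\Ep_c=\{\Pr[B(u)=c\mid N^t(e)]<K\}$ is measurable with respect to the coin flips in $A_c(u)$. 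Conditioning first on $N^t(e)$ (which contains $N^{t-1}(u)$) and pulling the $\Ep_c$-indicator through the inner conditional expectation,
\[
 \Pr\bigl[\Ep_c \wedge (B(u)=c) \;\big|\; N^{t-1}(u)\bigr] \;=\; \mathbb{E}\bigl[\mathbf{1}[\Ep_c]\cdot\Pr[B(u)=c\mid N^t(e)] \;\big|\; N^{t-1}(u)\bigr] \;\le\; K,
\]
since on $\Ep_c$ the factor $\Pr[B(u)=c\mid N^t(e)]$ is below $K$. Summing over the three colours yields $\Pr[C(u)=\emptyset\mid N^{t-1}(u)]\le 3K$.

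The proof is short and the arithmetic trivial; the only place that needs care is the probabilistic bookkeeping: checking that $\Ep_c$ really is measurable in the coin flips of $A_c(u)$, that the conditional probability truncated in the definition of $\Ep_c$ is exactly $\Pr[B(u)=c\mid N^t(e)]$, and that the high-girth hypothesis is what licenses the clean decomposition of $N^t(u)$ into $N^{t-1}(u)$ together with the independent blocks $A_0(u),A_1(u),A_2(u)$. Note that no correctness property of $B$ enters here — we only use that it halts in $t$ rounds and outputs one of the three colours.
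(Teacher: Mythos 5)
Your proof is correct and follows essentially the same route as the paper: decompose the event $\{C(u)=\emptyset\}=\bigcap_c\Ep_c$ over the value of $B(u)$, relax to the single event $\Ep_c$ in each term, and then bound $\Pr[\Ep_c\wedge B(u)=c\mid N^{t-1}(u)]$ by $K$ via the tower rule using the fact that $\Ep_c$ is $N^t(e)$-measurable. Your write-up is slightly more careful about the measurability bookkeeping, and the mention of the decomposition of $N^t(u)$ into independent blocks is true but not actually used here; otherwise the arguments coincide.
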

\begin{proof}
Let $\Ep = \bigcap \Ep_c$ be the event that the random values in each $A_c(u)$ are unlucky given $N^{t-1}(u)$. This is the case if and only if $C(u) = \emptyset$, which implies that
\[
 \Pr[C(u) = \emptyset \mid N^{t-1}(u)] = \Pr[\Ep],
\]
where the right-hand side can be written as
\[
 \Pr[\Ep] = \sum_{c} \Pr[\Ep \textrm{ and } B(u) = c]. 
\]
Observe that since $\Ep \subseteq \Ep_c$ for any colour $c$, we have that
\begin{align*}
 \Pr[\Ep \textrm{ and } B(u) = c] &= \Pr[\Ep \textrm{ and } B(u) = c \mid \Ep_c] \cdot \Pr[\Ep_c] \\
                         &\le \Pr[B(u) = c \mid \Ep_c] \cdot \Pr[\Ep_c] \\
						 &\le \Pr[B(u) = c \mid \Ep_c].
\end{align*}
Since by definition the coin flips in $A_c(u)$ are unlucky in the event $\Ep_c$, we get that $\Pr[B(u) = c \mid \Ep_c] < K$. Thus combining the above, we have that
\[  
\Pr[C(u) = \emptyset \mid N^{t-1}(u)] = \sum_{c} \Pr[\Ep \textrm{ and } B(u) = c] < \sum_{c} K.
\]
Since we have three colours, the claim follows.
\end{proof}

\begin{definition}[Nice edge neighbourhoods]\label{def:niceness}
For an edge $e = \{u,v\}$, we call its fixed neighbourhood $N^{t}(e)$ \emph{nice} if 
\[
 \Pr[B(u) = \psi(e) = B(v) \mid N^t(e)] < K^2.
\]
That is, after fixing the random coin flips in $N^t(e)$, the algorithm outputs $\psi(e)$ with probability less than $K^2$. Otherwise, we call $N^t(e)$ a \emph{bad} neighbourhood. 
\end{definition}

\begin{lemma}
Let $e = \{u,v\}$ be an edge with no cycles in its radius-$(t+1)$ neighbourhood. If the fixed neighbourhood $N^t(e)$ is nice, then $\psi(e) \notin C(u) \cap C(v)$.
 \label{lemma: candc case}
\end{lemma}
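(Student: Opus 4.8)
The plan is to argue by contradiction: suppose $N^t(e)$ is nice but $\psi(e) \in C(u) \cap C(v)$, and derive that the niceness inequality is violated. Write $c = \psi(e)$. By the definition of the candidate set, $c \in C(u)$ means exactly that $\Pr[B(u) = c \mid N^t(e)] \ge K$, and likewise $c \in C(v)$ means $\Pr[B(v) = c \mid N^t(e)] \ge K$. So I need to go from these two one-sided lower bounds on the conditional probabilities of $u$ and $v$ separately outputting $c$ to a lower bound on the probability that they \emph{both} output $c$, namely $\Pr[B(u) = c = B(v) \mid N^t(e)] \ge K^2$, contradicting niceness.

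The key structural fact that makes this work is \emph{independence}: once we condition on the fixed neighbourhood $N^t(e) = N^t(u) \cap N^t(v)$, the outputs $B(u)$ and $B(v)$ depend on disjoint sets of remaining random coins. Indeed, $B(u)$ is a function of the coins in $N^t(u)$ and $B(v)$ of the coins in $N^t(v)$; the coins already fixed are those in $N^t(u) \cap N^t(v)$, and the crucial point is that $N^t(u) \setminus N^t(v)$ and $N^t(v) \setminus N^t(u)$ are disjoint vertex sets, so the un-fixed coins governing $B(u)$ are independent of those governing $B(v)$. Here is where the hypothesis ``no cycles in the radius-$(t+1)$ neighbourhood of $e$'' enters: in a tree (locally), removing the separating set $N^t(e)$ genuinely disconnects the part of $N^t(u)$ not in $N^t(v)$ from the part of $N^t(v)$ not in $N^t(u)$, so there is no vertex lying in both $N^t(u) \setminus N^t(e)$ and $N^t(v) \setminus N^t(e)$. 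Without the high-girth/tree assumption a short cycle through $e$ could put a vertex within distance $t$ of both $u$ and $v$ yet outside $N^t(e)$, and the independence would fail.

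Given this conditional independence, the argument is a one-line computation:
\[
 \Pr[B(u) = c = B(v) \mid N^t(e)] = \Pr[B(u) = c \mid N^t(e)] \cdot \Pr[B(v) = c \mid N^t(e)] \ge K \cdot K = K^2,
\]
which directly contradicts the assumption that $N^t(e)$ is nice. I would phrase the independence claim carefully first (identifying which random variables each output is a measurable function of, and checking the relevant vertex sets are disjoint using the tree structure), then conclude.

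The main obstacle is not the computation but making the independence step rigorous: one must be precise that, after conditioning on all coins in $N^t(e)$, the conditional law of $B(u)$ is determined by coins in $N^t(u) \setminus N^t(e)$ and that of $B(v)$ by coins in $N^t(v) \setminus N^t(e)$, and that these two coin collections are over \emph{disjoint} vertices — which is exactly the acyclicity of the radius-$(t+1)$ neighbourhood of $e$ (it suffices that the radius-$t$ neighbourhoods behave like subtrees so that $N^t(e)$ separates). Everything else is bookkeeping with the definitions of $C(u)$, $C(v)$, and niceness.
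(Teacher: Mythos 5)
Your proof is correct and follows essentially the same route as the paper's: assume $\psi(e)\in C(u)\cap C(v)$, unpack the definition of $C(\cdot)$ into two bounds of the form $\Pr[B(w)=\psi(e)\mid N^t(e)]\ge K$, observe that after fixing the coins in $N^t(e)$ the outputs $B(u)$ and $B(v)$ are functions of coins at the disjoint vertex sets $N^t(u)\setminus N^t(v)$ and $N^t(v)\setminus N^t(u)$, hence conditionally independent, and multiply to get $\Pr[B(u)=\psi(e)=B(v)\mid N^t(e)]\ge K^2$, contradicting niceness. One small aside on your commentary: the failure mode you describe for low girth --- a vertex within distance $t$ of both $u$ and $v$ yet outside $N^t(e)$ --- cannot occur, because $N^t(e)=N^t(u)\cap N^t(v)$ by definition; the sets $N^t(u)\setminus N^t(v)$ and $N^t(v)\setminus N^t(u)$ are disjoint purely set-theoretically, and the paper's own proof invokes acyclicity at the same step with no more elaboration than you give, so your argument is on exactly the same footing.
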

\begin{proof}
Let $N^t(e)$ be fixed and nice. Assume for contradiction that $\psi(e) \in C(u) \cap C(v)$. By definition of the candidate colour set, for both $w \in e$ we have $\psi(e) \in C(w)$ if
\[
 \Pr[B(w) = \psi(e) \mid N^t(e)] \ge K.
\]
As the output $B(u)$ of node $u$ is determined by the coin flips in $N^t(u)$ and the coin flips in $N^t(e) = N^t(u) \cap N^t(v)$ are fixed, we now have that $B(u)$ only depends on the coin flips in $N^t(u) \setminus N^t(v)$. Similarly, the output $B(v)$ of $v$ only depends on the coin flips in $N^t(v) \setminus N^t(u)$. Therefore, the events $B(u) = \psi(e)$ and $B(v) = \psi(e)$ are independent, as $N^{t+1}(e)$ contains no cycles, and we get
\[
 \Pr[B(u) = \psi(e) = B(v) \mid N^t(e)] \ge K^2,
\]
which contradicts the assumption that $N^t(e)$ was nice.
\end{proof}

Now it is easy to check that if a node $u$ has at least  one candidate colour and all its incident edges have nice neighbourhoods, then $u$ will not be a sink according to $B'$.

\begin{lemma}\label{lemma:not-a-sink}
 Suppose $N^{t}(u)$ is fixed and the neighbourhoods $N^t(e)$ are nice for all edges $e = \{u,w\}$ incident to $u$. If $C(u) \neq \emptyset$, then $B'(e') = u \rightarrow v$ for some $e' = \{u,v\}$.
\end{lemma}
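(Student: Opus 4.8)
The plan is to derive this directly from \lemmaref{lemma: candc case} together with the orientation rule that defines $B'$. Since $C(u) \neq \emptyset$, I would first fix some candidate colour $c \in C(u)$. Because $G$ is $3$-regular and edge $3$-coloured, node $u$ has exactly one incident edge of colour $c$; call it $e' = \{u,v\}$. The goal is to show that $B'$ orients $e'$ as $u \rightarrow v$.

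Next I would invoke the hypotheses. By assumption $N^t(e')$ is nice (in the sense of \defref{def:niceness}), and by the standing girth assumption ($\text{girth} > 2t+1$) the radius-$(t+1)$ neighbourhood of $e'$ is a tree, hence contains no cycles. These are exactly the two hypotheses of \lemmaref{lemma: candc case}, which therefore yields $\psi(e') = c \notin C(u) \cap C(v)$. Since we chose $c \in C(u)$, this forces $c \notin C(v)$.

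Finally I would appeal to the case analysis defining $B'$: we are now in the situation $\psi(e') \in C(u)$ and $\psi(e') \notin C(v)$, which is precisely the first line of the rule, so $B'(e') = u \rightarrow v$. As $e' = \{u,v\}$ is incident to $u$, this shows $u$ is not a sink under $B'$, completing the proof.

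There is no real obstacle here; the lemma is essentially a bookkeeping step combining \lemmaref{lemma: candc case} with the definition of $B'$. The only point requiring a moment's care is to note that the relevant edge $e'$ is well-defined and unique (using $3$-regularity and the edge $3$-colouring) and that its $(t+1)$-neighbourhood is genuinely cycle-free so that \lemmaref{lemma: candc case} applies; both follow immediately from the assumptions in force throughout \sectionref{sec:speedup}.
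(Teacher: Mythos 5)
Your proof is correct and follows essentially the same route as the paper's: pick a candidate colour in $C(u)$, apply \lemmaref{lemma: candc case} to the corresponding incident edge to exclude it from $C(v)$, and then read off the orientation from the definition of $B'$. The only difference is that you spell out the uniqueness of the edge of colour $c$ and the cycle-freeness of its $(t+1)$-neighbourhood, which the paper leaves implicit.
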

\begin{proof}
 Since $C(u) \neq \emptyset$, there is some $\psi(e) \in C(u)$. Moreover as $N^t(e)$ is nice, \lemmaref{lemma: candc case} implies that $\psi(e) \notin C(u) \cap C(v)$, and thus, $\psi(e) \notin C(v)$. By definition of $B'$, we have $B'(e) = u \rightarrow v$.
\end{proof}

Now we have all the pieces to show the first part of the mutual speedup lemma.

\begin{lemma}
Suppose $B$ is a sinkless colouring algorithm that runs in $t$ rounds such that for any edge $e = \{u,v\}$ the probability of outputting a forbidden configuration $B(u) = \psi(e) = B(v)$ is at most $p$. Then there exists a sinkless orientation algorithm $B'$ that runs in $t$ rounds such that for any node $u$ the probability of being a sink is at most $6p^{1/3}$.
  \label{lemma: SCtoSO}
\end{lemma}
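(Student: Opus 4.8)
The plan is to write the event ``$u$ is a sink under $B'$'' as a union of two bad events that are controlled by the lemmas already proved, bound each separately, and then choose the free parameter $K$ to balance the resulting error terms. Concretely, by \lemmaref{lemma:not-a-sink}, if $C(u) \neq \emptyset$ and the neighbourhood $N^t(e)$ is nice for every one of the three edges $e$ incident to $u$, then $B'$ orients some edge out of $u$, so $u$ is not a sink. Taking the contrapositive and applying a union bound,
\[
 \Pr[u \text{ is a sink}] \le \Pr[C(u) = \emptyset] + \sum_{e \ni u} \Pr[N^t(e) \text{ is bad}].
\]

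For the first term, \lemmaref{lemma: coloursetnotempty} gives $\Pr[C(u) = \emptyset \mid N^{t-1}(u)] \le 3K$ for every fixing of $N^{t-1}(u)$, so averaging over $N^{t-1}(u)$ yields $\Pr[C(u) = \emptyset] \le 3K$ unconditionally. For the second term, fix an edge $e = \{u,v\}$ and write the unconditional forbidden-configuration probability as an expectation over the random bits in $N^t(e)$:
\[
 p \ge \Pr[B(u) = \psi(e) = B(v)] = \mathbb{E}\bigl[\Pr[B(u) = \psi(e) = B(v) \mid N^t(e)]\bigr].
\]
By \defref{def:niceness}, on the event that $N^t(e)$ is bad the quantity inside the expectation is at least $K^2$, so Markov's inequality gives $\Pr[N^t(e) \text{ is bad}] \le p/K^2$, and summing over the three incident edges gives $\sum_{e \ni u} \Pr[N^t(e) \text{ is bad}] \le 3p/K^2$. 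Combining, $\Pr[u \text{ is a sink}] \le 3K + 3p/K^2$; choosing $K = p^{1/3}$ (which we may assume is $< 1$, since otherwise the claimed bound is vacuous) balances the two terms and gives the stated bound $6p^{1/3}$.

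I do not expect a serious obstacle here: given the preceding lemmas the argument is a union bound together with one use of Markov's inequality and an optimisation of $K$. The points that need care are (i) keeping the three different conditionings straight — on $N^t(u)$ in the decomposition, on $N^{t-1}(u)$ when bounding $\Pr[C(u) = \emptyset]$, and on $N^t(e)$ when bounding the probability of a bad edge — and verifying they compose correctly under the law of total probability; and (ii) remembering that \lemmaref{lemma: candc case}, and hence \lemmaref{lemma:not-a-sink}, relies on the standing assumption of this section that the girth exceeds $2t+1$, so that the radius-$(t+1)$ neighbourhood of each edge is a tree and the outputs of $B$ at the two endpoints are independent after conditioning on $N^t(e)$. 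It is worth noting that the threshold $K^2$ in \defref{def:niceness} is chosen precisely so that both error terms are powers of $K$, which is what makes the final optimisation yield the clean exponent $1/3$.
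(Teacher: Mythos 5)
Your proposal is correct and follows essentially the same route as the paper: the same decomposition of the sink event via \lemmaref{lemma:not-a-sink}, the same bound $3K$ from \lemmaref{lemma: coloursetnotempty}, and the same choice $K = p^{1/3}$. The only cosmetic difference is that the paper introduces $S = \max_{e} \Pr[N^t(e)\text{ is bad}]$ and writes out the chain-rule inequality $p \ge S K^2$, whereas you invoke Markov's inequality directly on $\mathbb{E}\bigl[\Pr[B(u)=\psi(e)=B(v)\mid N^t(e)]\bigr]$; these are the same calculation.
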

\begin{proof}
Let $B'$ be as given earlier and consider a node $u$. 
By \lemmaref{lemma:not-a-sink}, algorithm $B'$ can produce a sink at node $u$ only if $C(u) = \emptyset$ or one of the edges incident to $u$ has a bad (i.e., not nice) neighbourhood. Let $S = \max_e \Pr[N^t(e) \textrm{ is bad}]$ be the maximum probability that some edge has a bad neighbourhood; the probability of having a bad neighbourhood need not be the same for edges of different colours. By the union bound, the probability that $N^t(e)$ is bad for \emph{some} edge $e = \{u,v\}$ is at most $3S$. By \lemmaref{lemma: coloursetnotempty}, the probability that $C(u) = \emptyset$ is at most $3K$. Thus, applying the union bound once again, we get that 
\begin{equation*}\label{eq:sink-probability}
 \Pr[\textrm{node } u \textrm{ is a sink}] \le \sum_{e = \{u,v\}} \Pr[N^t(e) \textrm{ is bad}] + \Pr[C(u) = \emptyset] \le 3S + 3K.
\end{equation*}
Now let us consider the probability that an edge $e = \{u,v\}$ has a forbidden configuration, where $e$ is an edge that attains $\Pr[N^t(e) \textrm{ is bad}] = S$. Recall that the probability of $B(u) = \psi(e) = B(v)$ is at most $p$, and thus,
\begin{align*}
 p &\ge \Pr[B(u) = \psi(e) = B(v)] \\
   &\ge \Pr[N^t(e) \textrm{ is bad}] \cdot \Pr[B(u) = \psi(e) = B(v) \mid N^t(e) \textrm{ is bad}] \\
   &\ge SK^2
\end{align*}
by \defref{def:niceness}. By setting $K = p^{1/3}$ we get that 
\[
 p \ge SK^2 = Sp^{2/3} \iff p^{1/3} \ge S
\]
and we have $3S+3K \le 6p^{1/3}$ which proves our claim.
\end{proof}

\subsection{From Sinkless Orientation Back to Sinkless Colouring}

We now show how to construct a randomised sinkless \emph{colouring} algorithm $B''$ that runs in time $t-1$ given a sinkless \emph{orientation} algorithm $B'$ that runs in time $t$. The approach is analogous to the one in the previous section. The high level idea is that any node $u$ first checks which of its incident edges are likely to be pointed outwards by $B'$, and then it can choose the colour of one of these edges to output a sinkless colouring with a large probability.

Unlike before, each node will gather only its radius-$(t-1)$ neighbourhood in $t-1$ rounds. Again, let $L$ be a threshold we fix later. Define the candidate colour set $C'(u)$ as 
\[
 C'(u) = \{ \psi(e) : \Pr[B'(e) = u \leftarrow v \mid N^{t-1}(u)] \le L \},
\]
that is, the set of colours which are pointed towards $u$ with probability at most $L$.
The node $u$ will then output the smallest candidate colour or an arbitrarily chosen colour if there are no candidates, or formally,
\[
  B''(u) = \begin{cases}
            \min C'(u) & \textrm{if } C'(u) \neq \emptyset, \\
            0          & \textrm{otherwise.}
           \end{cases}
\]
Our goal now is to show that this produces a sinkless colouring with a large probability. To do this, we show that the probabilities of the following two events are large: (1) the candidate set being non-empty and (2) $\psi(e) \notin C(u) \cap C(v)$ for any edge $e = \{u,v\}$.

Analogously to \sectionref{ssec:sc-to-so}, we define the notions of lucky/unlucky bits and nice/bad neighbourhoods.

\begin{definition}[Lucky random bits]\label{def:orient-luckiness}
For any $e = \{u,v\}$, let $A_u(e) = N^{t-1}(u) \setminus N^{t-1}(v)$. We say that the random coin flips in $A_u(e)$ are \emph{lucky} if
\[
 \Pr[ B'(e) = u \leftarrow v \mid N^{t-1}(e) \cup A_u(e)] \le L.
\]
Otherwise, the coin flips in $A_u(e)$ are \emph{unlucky}.
\end{definition}

\begin{lemma}
\label{lemma: SOtoSCintersection}
Given any fixed neighbourhood $N^{t-1}(e)$ of edge $e$, we have
\[
\Pr[ \psi(e) \in C'(u) \cap C'(v) \mid N^{t-1}(e)] \le 2L.
\]
\end{lemma}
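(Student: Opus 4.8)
The plan is to follow the template of \sectionref{ssec:sc-to-so}: fix the edge neighbourhood $N^{t-1}(e)$, reformulate membership in $C'(u)$ and in $C'(v)$ in terms of lucky coin flips on the two ``sides'' of $e$, and then play the two sides against each other using the fact that $B'$ must orient $e$ in exactly one direction. Concretely, since the graph has girth larger than $2t+1$ the radius-$t$ neighbourhood of $e = \{u,v\}$ is a tree, and for adjacent $u,v$ one has $N^t(e) = N^{t-1}(u) \cup N^{t-1}(v)$; hence $N^t(e)$ is the disjoint union of $N^{t-1}(e)$, of $A_u(e) = N^{t-1}(u) \setminus N^{t-1}(v)$, and of $A_v(e) = N^{t-1}(v) \setminus N^{t-1}(u)$. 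I condition on $N^{t-1}(e)$ being fixed and let $a$, $b$ denote the coin flips of the nodes in $A_u(e)$ and in $A_v(e)$ respectively; these are independent, as the node sets are disjoint. Because $B'$ runs in $t$ rounds, $B'(e)$ depends only on the coins in $N^t(e)$, hence is a deterministic function of $(a,b)$ once $N^{t-1}(e)$ is fixed.

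Next, exactly as in the observation following \defref{def:lucky}, whether $\psi(e) \in C'(u)$ depends only on $a$: the neighbourhood $N^{t-1}(u)$ is determined by the fixed $N^{t-1}(e)$ together with $a$, so $g(a) := \Pr[B'(e) = u \leftarrow v \mid N^{t-1}(u)]$ is a function of $a$ and $\psi(e) \in C'(u)$ holds exactly when $a \in U := \{a : g(a) \le L\}$ (i.e.\ when the coins in $A_u(e)$ are lucky in the sense of \defref{def:orient-luckiness}). Symmetrically, $h(b) := \Pr[B'(e) = u \rightarrow v \mid N^{t-1}(v)]$ is a function of $b$ and $\psi(e) \in C'(v)$ holds exactly when $b \in W := \{b : h(b) \le L\}$. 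Writing $\alpha = \Pr[a \in U]$ and $\beta = \Pr[b \in W]$, all conditioned on the fixed $N^{t-1}(e)$, independence of $a$ and $b$ gives $\Pr[\psi(e) \in C'(u) \cap C'(v) \mid N^{t-1}(e)] = \alpha\beta$.

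It remains to bound $\alpha\beta$. Since $B'(e)$ is always either $u \leftarrow v$ or $u \rightarrow v$, I split the event $\{a \in U,\ b \in W\}$ along this dichotomy. On the part where $B'(e) = u \leftarrow v$ I drop the constraint $b \in W$ and bound by $\Pr[B'(e) = u \leftarrow v,\ a \in U] = \mathbb{E}_a[\mathbf{1}_{U}(a)\,g(a)] \le L\alpha$, using $g \le L$ on $U$ (here $\Pr[B'(e)=u\leftarrow v \mid a] = g(a)$ since $B'(e)$ is a function of $(a,b)$). On the part where $B'(e) = u \rightarrow v$ I symmetrically drop $a \in U$ and bound by $\Pr[B'(e) = u \rightarrow v,\ b \in W] = \mathbb{E}_b[\mathbf{1}_{W}(b)\,h(b)] \le L\beta$, using $h \le L$ on $W$. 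Adding the two contributions, $\alpha\beta = \Pr[a \in U,\ b \in W] \le L\alpha + L\beta$, and since $\alpha, \beta \le 1$ this is at most $2L$, as claimed.

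The only real subtlety is recognising that the two candidate-set membership events are measurable with respect to disjoint pieces of randomness ($a$ versus $b$) and that they ``pull $e$ in opposite directions'' — $\psi(e) \in C'(u)$ makes $e$ likely to leave $u$, $\psi(e) \in C'(v)$ makes $e$ likely to leave $v$ — so that conditioning on which way $B'$ actually orients $e$ lets each side's defining inequality do the work on its own piece. One should also verify the tree/disjointness bookkeeping (e.g.\ $N^t(e) = N^{t-1}(e) \sqcup A_u(e) \sqcup A_v(e)$), which is where the girth hypothesis enters; this parallels computations already done in \sectionref{ssec:sc-to-so} and is routine.
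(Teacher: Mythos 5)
Your proof is correct and follows essentially the same route as the paper: both arguments split the event $\psi(e)\in C'(u)\cap C'(v)$ along the dichotomy $B'(e)=u\leftarrow v$ versus $B'(e)=u\rightarrow v$, drop the constraint coming from the ``wrong'' side in each case, and then invoke the defining threshold inequality of the relevant candidate set to bound each piece by $L$. The only differences are cosmetic bookkeeping---you name the conditional probabilities $g(a)$, $h(b)$ and briefly note the (unused) product form $\alpha\beta$, whereas the paper phrases the same step via the inclusion $\Ep\subseteq\Ep_u$.
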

\begin{proof}
Fix the random coin flips in $N^{t-1}(e)$. Let $\Ep_u$ be the event that the coin flips in $A_u(e)$ are lucky and let $\Ep = \Ep_u \cap \Ep_v$ be the event that coin flips in both $A_u(e)$ and $A_v(e)$ are lucky. Observe that $\psi(e) \in C'(u)$ if and only if the coin flips in $A_u(e)$ are lucky. Therefore,
\begin{align*}
 \Pr[\Ep] &= \Pr[\psi(e) \in C'(u) \cap C'(v) \mid N^{t-1}(e)] \\
 &= \Pr[\Ep \textrm{ and } B'(e) = u \rightarrow  v] + \Pr[\Ep \textrm{ and } B'(e) = u \leftarrow v].
\end{align*}
Since $\Ep \subseteq \Ep_u$, it follows that 
\begin{align*}
 \Pr[\Ep \textrm{ and } B'(e) = u \leftarrow v] &= \Pr[\Ep \textrm{ and } B'(e) = u \leftarrow v \mid \Ep_u] \cdot \Pr[\Ep_u] \\
 &\le \Pr[B'(e) = u \leftarrow v \mid \Ep_u] \le L 
\end{align*}
by \defref{def:orient-luckiness} as the coin flips in $A_u(e)$ are lucky in the event $\Ep_u$. Symmetrically, we also get the bound $\Pr[\Ep \textrm{ and } B'(e) = u \rightarrow v] \le L$. Combining the above observations we get that
\[
 \Pr[\Ep] = \Pr[\psi(e) \in C'(u) \cap C'(v) \mid N^{t-1}(e)] \le 2L. \qedhere
\]
\end{proof}

\begin{definition}[Nice node neighbourhoods]\label{def:node-niceness}
 Let $N^{t-1}(u)$ be fixed. We say that the neighbourhood $N^{t-1}(u)$ is \emph{nice} if the probability that $u$ is a sink when executing $B'$ is at most $L^3$, that is, if
\[
 \Pr[B'(e) = u \leftarrow v \textrm{ for all } e = \{u,v\} \mid N^{t-1}(u)] \le L^3
\]
holds. Otherwise, we call $N^{t-1}(u)$ a \emph{bad} neighbourhood.
\end{definition}

\begin{lemma}
Assume that the fixed neighbourhood $N^{t-1}(u)$ is nice. Then $C'(u) \neq \emptyset$.
 \label{lemma: Cprimenotempty}
\end{lemma}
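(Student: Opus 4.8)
The plan is to argue by contradiction: assume $C'(u) = \emptyset$ and show that then $N^{t-1}(u)$ cannot be nice. Since $G$ is $3$-regular and edge $3$-coloured, $u$ has exactly one incident edge $e_c = \{u,v_c\}$ of each colour $c \in \{0,1,2\}$, and by the definition of $C'(u)$ the assumption $C'(u) = \emptyset$ says precisely that
\[
 \Pr[B'(e_c) = u \leftarrow v_c \mid N^{t-1}(u)] > L \qquad \text{for every colour } c.
\]

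The key step is to establish that, conditioned on the fixed neighbourhood $N^{t-1}(u)$, the three events $\{B'(e_c) = u \leftarrow v_c\}$, $c = 0,1,2$, are mutually independent. The output $B'(e_c)$ depends only on the coin flips in $N^t(e_c) \subseteq N^t(u)$. Because the girth exceeds $2t+1$, the neighbourhood $N^t(u)$ is a tree, so the three branches hanging off $u$ through $v_0,v_1,v_2$ are vertex-disjoint; moreover any vertex of $N^t(e_c)$ not already contained in $N^{t-1}(u)$ lies at distance exactly $t$ from $u$, and since it also lies within distance $t$ of $v_c$ it must sit in the $v_c$-branch. Hence the sets $N^t(e_c) \setminus N^{t-1}(u)$ are pairwise disjoint, and after fixing the coins in $N^{t-1}(u)$ the conditional behaviours of $B'(e_0), B'(e_1), B'(e_2)$ depend on disjoint collections of random coins, which yields the claimed independence.

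Combining the two facts finishes the argument: node $u$ is a sink exactly when all three incident edges point towards it, so by independence
\[
 \Pr[\text{$u$ is a sink} \mid N^{t-1}(u)] = \prod_{c=0}^{2} \Pr[B'(e_c) = u \leftarrow v_c \mid N^{t-1}(u)] > L^3,
\]
which contradicts \defref{def:node-niceness}, the assumption that $N^{t-1}(u)$ is nice. Therefore $C'(u) \neq \emptyset$. The main obstacle is the independence step: one must verify carefully that conditioning on $N^{t-1}(u)$ leaves the randomness relevant to the three incident edges supported on disjoint vertex sets, and this is exactly where the high-girth (local-tree) hypothesis fixed at the start of \sectionref{sec:speedup} is used; the remaining computation is the single line displayed above.
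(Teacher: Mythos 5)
Your proof is correct and takes essentially the same route as the paper: contradiction, followed by using the high-girth (locally tree) hypothesis to show that, conditioned on $N^{t-1}(u)$, the three orientations $B'(e_c)$ depend on pairwise-disjoint sets of coin flips and hence are mutually independent, so the product of the three probabilities exceeds $L^3$, contradicting niceness. The only cosmetic difference is that you describe the relevant disjoint coin sets as $N^t(e_c)\setminus N^{t-1}(u)$ while the paper writes $N^{t-1}(v_c)\setminus N^{t-1}(u)$; in the local tree these coincide, and if anything your final line (bounding $\Pr[u\text{ is a sink}\mid N^{t-1}(u)]$ rather than $\Pr[C'(u)=\emptyset\mid N^{t-1}(u)]$) is the more precise way to state the contradiction with \defref{def:node-niceness}.
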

\begin{proof}
Fix the coin flips in $N^{t-1}(u)$ and assume $N^{t-1}(u)$ is nice. For the sake of contradiction, suppose $C'(u) = \emptyset$. Now by definition of $C'(u)$ we have
\[
 \Pr[B'(e) = u \leftarrow v \mid N^{t-1}(u)] > L
\]
for each edge $e = \{u,v\}$. Since the coin flips in $N^{t-1}(u)$ are fixed, the output $B'(e)$ only depends on the coin flips in $N^{t-1}(v) \setminus N^{t-1}(u)$. Since the girth is larger than $2t$, for each $e = \{u,v\}$ and $e' = \{u,v'\}$, where $v \neq v'$, the coin flips in $N^{t-1}(v) \setminus N^{t-1}(u)$ and $N^{t-1}(v') \setminus N^{t-1}(u)$ are independent. Therefore, the events $B'(e) = u \leftarrow v$ and $B'(e') = u \leftarrow v'$ are independent. This implies that 
\[
 \Pr[C'(u) = \emptyset \mid N^{t-1}(u)] = \prod_{e = \{u,v\}} \Pr[B'(e) = u \leftarrow v \mid N^{t-1}(u)] > L^3,
\]
contradicting the assumption that $N^{t-1}(u)$ is nice.
\end{proof}

\begin{lemma}
Suppose $B'$ is a sinkless \emph{orientation} algorithm that runs in time $t$ such that the probability that any node $u$ is a sink is at most $\ell$. Then there exists a sinkless \emph{colouring} algorithm $B''$ that runs in time $t-1$ such that the probability for any edge $e = \{u,v\}$ having a forbidden configuration $B''(u) = \psi(e) = B''(v)$ is less than $4\ell^{1/4}$.
  \label{lemma: SOtoSC}
\end{lemma}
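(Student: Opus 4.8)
The plan is to mirror the structure of the proof of \lemmaref{lemma: SCtoSO}, but with the roles of colourings and orientations swapped. I would take the algorithm $B''$ exactly as defined just above the statement, and analyse when an edge $e = \{u,v\}$ can exhibit a forbidden configuration $B''(u) = \psi(e) = B''(v)$. The key observation is: if $N^{t-1}(u)$ and $N^{t-1}(v)$ are both nice in the sense of \defref{def:node-niceness}, then by \lemmaref{lemma: Cprimenotempty} both $C'(u)$ and $C'(v)$ are non-empty, so $B''$ outputs genuine candidate colours at both endpoints. If in addition $N^{t-1}(e)$ is nice in the sense that $\Pr[\psi(e) \in C'(u) \cap C'(v) \mid N^{t-1}(e)]$ is controlled (which is exactly what \lemmaref{lemma: SOtoSCintersection} gives unconditionally, bounding it by $2L$), then it cannot happen that $\psi(e)$ lies in both $C'(u)$ and $C'(v)$; but a forbidden configuration at $e$ requires precisely $\psi(e) = B''(u) \in C'(u)$ and $\psi(e) = B''(v) \in C'(v)$, hence $\psi(e) \in C'(u) \cap C'(v)$, a contradiction. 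So a forbidden configuration at $e$ forces either $N^{t-1}(u)$ bad, or $N^{t-1}(v)$ bad, or the "bad intersection" event $\psi(e) \in C'(u) \cap C'(v)$ to occur.

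From there the calculation is a union bound together with an averaging/Markov-type trade-off to tune $L$, exactly paralleling the end of \lemmaref{lemma: SCtoSO}. First I would bound $\Pr[B''(u) = \psi(e) = B''(v)]$ by $\Pr[N^{t-1}(u)\text{ bad}] + \Pr[N^{t-1}(v)\text{ bad}] + \Pr[\psi(e) \in C'(u) \cap C'(v)]$. The last term is at most $2L$ by \lemmaref{lemma: SOtoSCintersection} (integrating the conditional bound over fixings of $N^{t-1}(e)$). For the "bad node" terms, let $S = \max_u \Pr[N^{t-1}(u)\text{ is bad}]$; then by definition of bad (\defref{def:node-niceness}), a bad $N^{t-1}(u)$ makes $u$ a sink under $B'$ with probability exceeding $L^3$, so $\ell \ge \Pr[u\text{ is a sink}] \ge \Pr[N^{t-1}(u)\text{ bad}] \cdot L^3 \ge SL^3$, giving $S \le \ell/L^3$. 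Hence $\Pr[B''(u)=\psi(e)=B''(v)] \le 2S + 2L \le 2\ell/L^3 + 2L$. Setting $L = \ell^{1/4}$ balances the two terms to $2\ell^{1/4} + 2\ell^{1/4} = 4\ell^{1/4}$, matching the claimed bound (and the strict inequality comes from the strict inequality in the definition of "bad").

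The one point that needs a little care — and the place I would expect the only real friction — is the independence argument underpinning \lemmaref{lemma: Cprimenotempty}, namely that after fixing $N^{t-1}(u)$ the three events $B'(e) = u \leftarrow v$ over the three incident edges $e = \{u,v\}$ are mutually independent. This relied on the girth being larger than $2t$ so that the sets $N^{t-1}(v)\setminus N^{t-1}(u)$ for distinct neighbours $v$ are disjoint and their induced subgraphs share no node; I would just invoke \lemmaref{lemma: Cprimenotempty} as a black box here. A secondary bookkeeping point is that \lemmaref{lemma: SOtoSCintersection} is stated conditioned on a fixed $N^{t-1}(e)$, so to use it unconditionally one takes expectation over those fixings, which preserves the bound $2L$ since it holds pointwise. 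With those two ingredients in hand the proof is a short union bound plus the choice $L = \ell^{1/4}$, and combined with \lemmaref{lemma: SCtoSO} this completes the mutual speedup lemma.
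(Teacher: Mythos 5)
Your proposal is correct and takes essentially the same route as the paper: decompose the forbidden-configuration event using \lemmaref{lemma: Cprimenotempty} and \lemmaref{lemma: SOtoSCintersection}, union bound to get $2S+2L$, relate $S$ to $\ell$ via the definition of a bad neighbourhood, and set $L=\ell^{1/4}$.

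One small point in your favour: your three-way decomposition (``$N^{t-1}(u)$ bad, or $N^{t-1}(v)$ bad, or $\psi(e)\in C'(u)\cap C'(v)$'') is in fact slightly more careful than the paper's, which reads literally ``either $C'(u)\cup C'(v)=\emptyset$ or $\psi(e)\in C'(u)\cap C'(v)$.'' That literal disjunction misses the case where, say, $C'(u)=\emptyset$ while $0=\psi(e)=\min C'(v)$ with $C'(v)\neq\emptyset$; your version (equivalently, ``$C'(u)=\emptyset$ or $C'(v)=\emptyset$ or $\psi(e)\in C'(u)\cap C'(v)$'') covers it, and this is clearly what the paper intends since its next line union-bounds the two individual emptiness events. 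Your observation about integrating the conditional bound of \lemmaref{lemma: SOtoSCintersection} over fixings of $N^{t-1}(e)$ is also the right justification for using it unconditionally, and invoking \lemmaref{lemma: Cprimenotempty} as a black box is exactly what the paper does.
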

\begin{proof}
Let $B''$ as defined earlier and consider an edge $e = \{u,v\}$. If algorithm $B''$ outputs a forbidden configuration $B''(u) = \psi(e) = B''(v)$, then either $C'(u) \cup C'(v) = \emptyset$ or $\psi(e) \in C'(u) \cap C'(v)$ holds. We will now bound the probability of both events.

Observe that before fixing any random bits, the probability of having a bad radius-$(t-1)$ neighbourhood is the same for all nodes, as all radius-$(t-1)$ node neighbourhoods are identical. Let $S = \Pr[N^{t-1}(u) \textrm{ is bad}]$ be this probability. By union bound and \lemmaref{lemma: Cprimenotempty} we get that
\begin{align*}
 \Pr[C'(u) \cup C'(v) = \emptyset] &\le \Pr[C'(u) = \emptyset] + \Pr[C'(v) = \emptyset] \\
                                   &\le \Pr[N^{t-1}(u) \textrm{ is bad}] + \Pr[N^{t-1}(v) \textrm{ is bad}] \\ 
&\le 2S.
\end{align*}
From \lemmaref{lemma: SOtoSCintersection} we get that 
\[
 \Pr[\psi(e) \in C'(u) \cap C'(v)] \le 2L.
\]
Using the union bound and the above, we get that the probability of a forbidden configuration is 
\[
 \Pr[B''(u) = \psi(e) = B''(v)] \le 2S+2L.
\]

To prove the claim, observe that from \defref{def:node-niceness} and the assumption that $B'$ produces a sink at $u$ with probability at most $\ell$, it follows that 
\[
 \ell \ge \Pr[u \textrm{ is a sink}] \ge \Pr[u \textrm{ is a sink} \mid N^{t-1}(u) \textrm{ is bad}] \cdot \Pr[N^{t-1}(u) \textrm{ is bad}] > SL^3.
\]
Therefore, $\ell > SL^3$. By setting $L = \ell^{1/4}$ we get that $S < \ell^{1/4}$ implying $2S + 2L < 4\ell^{1/4}$. 
\end{proof}

\subsection{The Speedup Lemma}

The following is an immediate consequence of \lemmaref{lemma: SCtoSO} and \lemmaref{lemma: SOtoSC}.

\begin{lemma}\label{lemma:speedup}
 Suppose $B$ is a sinkless colouring algorithm that runs in time $t$ such that for any edge $e$ the probability that $B$ produces a forbidden configuration at $e$ is at most $p$. Then there is a sinkless colouring algorithm $B''$ that runs in $t-1$ rounds such that it produces a forbidden configuration at any edge with probability less than $4\cdot6^{1/4}\cdot p^{1/12}$.
\end{lemma}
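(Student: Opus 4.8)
The plan is to simply compose the two halves of the mutual speedup lemma that have already been established. Starting from a sinkless colouring algorithm $B$ that runs in $t$ rounds with forbidden-configuration probability at most $p$, I would first invoke \lemmaref{lemma: SCtoSO} to obtain a sinkless orientation algorithm $B'$ running in $t$ rounds whose probability of producing a sink at any node is at most $6p^{1/3}$. Both lemmas require the same hypothesis on the input graph (a $3$-regular edge $3$-coloured graph of girth larger than $2t+1$), so this is inherited by $B'$; note in particular that girth larger than $2t+1$ is also more than enough to satisfy the girth requirement ($>2t$) implicitly used in the second half when it is applied at parameter $t$.

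Next I would feed $B'$ into \lemmaref{lemma: SOtoSC}, with its parameter $\ell$ set to $6p^{1/3}$, to obtain a sinkless colouring algorithm $B''$ running in $t-1$ rounds with forbidden-configuration probability less than $4\ell^{1/4} = 4(6p^{1/3})^{1/4}$. The remaining work is the arithmetic simplification $4(6p^{1/3})^{1/4} = 4 \cdot 6^{1/4} \cdot p^{1/12}$, which gives exactly the stated bound. I would present this as a two-sentence proof: apply \lemmaref{lemma: SCtoSO}, then apply \lemmaref{lemma: SOtoSC} with $\ell = 6p^{1/3}$, then rewrite the exponents.

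There is essentially no obstacle here — the statement is explicitly flagged as "an immediate consequence" of the two preceding lemmas, so the only things to be careful about are bookkeeping details: (i) that the girth hypothesis propagates correctly (the composition is applied once, so $t$ does not change between the two invocations, and girth $>2t+1$ covers both); (ii) that the failure probability fed into the second lemma is the output of the first, not $p$ itself; and (iii) that the chain of inequalities is strict where \lemmaref{lemma: SOtoSC} gives a strict inequality, so the final bound is correctly stated with "less than". None of these requires real calculation, so the whole proof is a few lines.
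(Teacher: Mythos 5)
Your proposal is correct and matches the paper's intent exactly: the paper presents this as an immediate consequence of Lemmas~\ref{lemma: SCtoSO} and~\ref{lemma: SOtoSC}, composed with $\ell = 6p^{1/3}$, which simplifies to $4\cdot 6^{1/4}\cdot p^{1/12}$ just as you compute. The bookkeeping observations you flag (girth hypothesis, feeding the right failure probability, strictness of the final inequality) are all correct and handled the same way.
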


\section{Lower Bounds}

\begin{lemma}
\label{lemma:high-girth-graphs}
 Fix $d \ge 3$. There exists an infinite family of $d$-regular graphs $\mathcal{G}$ such that the edges of every $G \in \mathcal{G}$ can be coloured with $d$ colours and the girth of $G$ is $\Omega(\log n)$.
\end{lemma}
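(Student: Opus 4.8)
The statement asks for an infinite family of $d$-regular graphs that simultaneously (i) admit a proper edge $d$-colouring and (ii) have girth $\Omega(\log n)$. The plan is to start from \emph{any} known construction of $d$-regular graphs with logarithmic girth and then argue that, after at most a bounded amount of modification, we can also ensure a proper edge $d$-colouring. For the first ingredient I would invoke a standard fact from extremal graph theory: for every fixed $d \ge 3$ there are $d$-regular graphs on $n$ vertices with girth $\Omega(\log_{d-1} n)$ (for instance via the Erd\H{o}s--Sachs counting argument, or explicitly via Ramanujan graphs / Cayley graphs of $\mathrm{PSL}_2$); one can take an infinite sequence of such graphs with $n \to \infty$. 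For the second ingredient, recall that a $d$-regular graph has a proper edge $d$-colouring (i.e.\ is Class 1) precisely when its edge set decomposes into $d$ perfect matchings.

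Concretely, I would proceed as follows. First, fix $d$ and pull a $d$-regular graph $H$ of girth $g(H) = \Omega(\log |V(H)|)$ from the chosen family. If $H$ happens to be bipartite, we are done immediately: by K\H{o}nig's edge-colouring theorem every bipartite $d$-regular graph has a proper edge $d$-colouring, and bipartite $d$-regular graphs of logarithmic girth exist (e.g.\ the bipartite double cover of a $d$-regular logarithmic-girth graph, which at most doubles $n$ and does not decrease the girth). So the cleanest route is: take any $d$-regular $H$ with girth $\Omega(\log n)$, pass to its bipartite double cover $H \times K_2$; this is again $d$-regular, has girth at least $g(H)$ (odd cycles are destroyed, even cycles survive), has $2|V(H)|$ vertices, and is bipartite, hence edge $d$-colourable by K\H{o}nig. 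Letting $\mathcal{G}$ be the collection of these doubled graphs over the infinite family gives exactly what is claimed, since $\log(2n) = \Theta(\log n)$.

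The main thing to be careful about — and what I expect to be the only real obstacle — is making sure the three properties do not conflict: adding bipartiteness to get edge-colourability must not cost us the girth lower bound, and the family must remain infinite with $n$ growing. The bipartite double cover handles all three at once (regularity and bipartiteness are immediate; girth is non-decreasing because any closed walk in $H \times K_2$ projects to a closed walk of the same length in $H$, and a shortest cycle in the cover has even length, so its projection is a closed walk of that length in $H$, forcing length $\ge g(H)$; and the vertex count only doubles). Hence no delicate estimates are needed — the proof is essentially: cite an existence theorem for logarithmic-girth $d$-regular graphs, take bipartite double covers, apply K\H{o}nig. I would write it in that order, keeping the girth-preservation argument for the double cover as the one step that deserves a sentence of justification.
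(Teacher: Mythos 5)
Your proposal is correct and matches the paper's proof essentially step for step: take a known $d$-regular family with girth $\Theta(\log n)$, pass to the bipartite double cover (preserving regularity and girth up to constants), and invoke K\H{o}nig's edge-colouring theorem. The paper even cites the same two ingredients (Bollob\'as for the high-girth family and K\H{o}nig's theorem for the edge colouring), so there is nothing to add.
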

\begin{proof}
 Let $\mathcal{G}'$ be an infinite family of $d$-regular graphs with girth $\Theta(\log n)$; see e.g.\ \cite[Ch. 3]{bollobas78extremal} how to obtain one. For any $G' \in \mathcal{G}'$ consider its bipartite double cover $G$ which is also $d$-regular and has girth of $\Theta(\log n)$. By König's line colouring theorem the edges of $G$ can be coloured with $d$ colours~\cite[Ch. 5.3]{diestel10graph}.
\end{proof}

\begin{theorem}\label{thm:colouring-lb}
 There does not exist a Monte Carlo distributed algorithm solving the sinkless colouring problem in $d$-regular graphs with high probability in $o(\log \log n)$ rounds.
\end{theorem}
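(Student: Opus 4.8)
The plan is to derive a contradiction from the assumption that a fast Monte Carlo sinkless colouring algorithm exists, by combining the speedup lemma (\lemmaref{lemma:speedup}) with the high-girth graph family of \lemmaref{lemma:high-girth-graphs}. Suppose $B_0$ is a Monte Carlo sinkless colouring algorithm running in $t_0 = o(\log\log n)$ rounds that succeeds with high probability, so its per-edge failure probability is at most $p_0 \le n^{-c}$ for a large constant $c$. Restrict attention to the $3$-regular edge $3$-coloured graphs from \lemmaref{lemma:high-girth-graphs}, which have girth $g = \Omega(\log n)$; as long as the current running time $t$ satisfies $2t+1 < g$, the speedup lemma applies. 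I would iterate the speedup lemma: from a $t$-round algorithm with per-edge failure probability $p$ we obtain a $(t-1)$-round algorithm with per-edge failure probability less than $4\cdot 6^{1/4}\cdot p^{1/12}$. After $t_0$ iterations we would have a $0$-round sinkless colouring algorithm. But a $0$-round algorithm must decide each node's colour from its own random bits alone (it sees no neighbours), and by a direct argument no such algorithm can succeed with nontrivial probability on high-girth graphs --- in particular on a long cycle-like region two adjacent nodes and the edge between them form a forbidden configuration with probability bounded below by a positive constant (each of the three colours is output with some probability, and the girth forces independence of the two endpoints' outputs), so a $0$-round algorithm fails at some edge with probability $\Omega(1)$, contradicting a failure probability driven to below $1/n^c$.

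The quantitative heart of the argument is tracking how the failure probability evolves under the iteration. Write $p_{i+1} < 4\cdot 6^{1/4} p_i^{1/12}$. Setting $\alpha = 4\cdot 6^{1/4}$ and taking (negative) logarithms, if $q_i = -\log p_i$ then $q_{i+1} > \frac{1}{12} q_i - \log\alpha$, so $q_i$ shrinks by a factor $12$ each step (up to the additive constant), and $q_i$ stays bounded below by a positive constant only for about $\log_{12} q_0$ steps before $p_i$ rises above a fixed threshold like $1/12$. Since $q_0 = -\log p_0 = \Theta(\log n)$, we have $\log_{12} q_0 = \Theta(\log\log n)$. Hence if $t_0 = o(\log\log n)$, then after all $t_0$ speedup steps --- which is fewer than $\log_{12} q_0$ steps --- the resulting $0$-round (or $O(1)$-round, whichever the bookkeeping yields) algorithm still has per-edge failure probability strictly less than the constant lower bound $\Omega(1)$ forced by the independence argument above, a contradiction. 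One should also check that the girth constraint $2t_i + 1 < g = \Omega(\log n)$ is satisfied throughout: since $t_i \le t_0 = o(\log\log n) \ll \log n$, this is immediate.

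For the base-case impossibility, here is the argument in a bit more detail. Consider a $0$-round sinkless colouring algorithm $A$ on a $3$-regular edge $3$-coloured graph of girth $\ge 5$ (any such graph from the family suffices). Since $A$ runs in $0$ rounds, node $u$'s output colour depends only on its own random string $x(u)$; let $q_c$ denote $\Pr[A(u) = c]$ for $c \in \{0,1,2\}$, which by symmetry of the input (all radius-$0$ neighbourhoods are identical) is the same for every node. For an edge $e = \{u,v\}$ with $\psi(e) = c$, the outputs $A(u)$ and $A(v)$ depend on the disjoint random strings $x(u)$ and $x(v)$, hence are independent, so the probability of the forbidden configuration at $e$ is exactly $q_c^2$. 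Since $q_0 + q_1 + q_2 = 1$, at least one $q_c \ge 1/3$, so on the edge of that colour the failure probability is at least $1/9$. Thus any $0$-round algorithm fails with probability at least $1/9$ on some edge; in a graph with $n$ nodes this certainly exceeds $1/n^c$ for large $n$, completing the contradiction and proving \theoremref{thm:colouring-lb}.

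The main obstacle I anticipate is purely the bookkeeping: one must make the "$o(\log\log n)$ rounds forces failure probability still below the constant barrier after all the speedup iterations" step precise, in particular handling the additive constant $\log\alpha$ in the recursion $q_{i+1} > \frac{1}{12}q_i - \log\alpha$ cleanly (e.g. by showing that once $q_i$ exceeds some fixed threshold $q^\star$ depending only on $\alpha$, it takes strictly more than $t_0$ further steps to drop below that threshold, using $q_0 = \Theta(\log n)$), and making sure the number of speedup applications is genuinely less than the number of steps needed to destroy the probability guarantee. None of this is deep, but it is where the "$o(\log\log n)$" quantifier has to be spent carefully.
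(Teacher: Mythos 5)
Your proposal takes essentially the same route as the paper: iterate \lemmaref{lemma:speedup} on a high-girth $3$-regular instance from \lemmaref{lemma:high-girth-graphs}, track how the per-edge failure probability degrades under $p_{i+1} < 4\cdot 6^{1/4}\, p_i^{1/12}$, and show that with $t_0 = o(\log\log n)$ iterations the resulting $0$-round algorithm still fails with probability below a constant, contradicting the pigeonhole-plus-independence argument that any $0$-round algorithm has some edge colour with forbidden-configuration probability at least $1/9$. Your $q_i = -\log p_i$ bookkeeping is an equivalent reformulation of the paper's direct bound $p_t \le z^s p_0^{1/12^t}$ with $s<2$. One small slip: the final sentence of your base-case paragraph compares the forced $1/9$ failure rate against $1/n^c$, but the relevant comparison is against the amplified bound $p_{t_0}$ on the $0$-round algorithm's failure probability (which is a constant, not $1/n^c$) --- you do state this correctly in the preceding paragraph, so the argument is sound, but that last sentence should be fixed to avoid confusion.
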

\begin{proof}
For the sake of contradiction, suppose $A$ is an algorithm that solves sinkless colouring in $f_c(n) \in o(\log \log n)$ rounds with probability at least $1 - 1/n^c$ for an arbitrarily large constant $c$. Now fix a sufficiently large 3-regular graph $G$ of $n$ nodes given by \lemmaref{lemma:high-girth-graphs}. 

Let $t = f_c(n)$ and for $i \in \{0, \ldots, t\}$ let $A_i$ be the algorithm attained after $i$ iterated applications of \lemmaref{lemma:speedup}. Let $p_i$ be the probability that $A_i$ produces a forbidden configuration at any given edge $e$. By assumption $p_0 \le 1/n^c$ and from \lemmaref{lemma:speedup} it follows that $p_{i+1} \le zp_i^{1/12}$, where $z = 4 \cdot 6^{1/4}$. In particular, the probability that algorithm $A_t$ running in 0 rounds produces a forbidden configuration at edge $e$ is
\[
 p_t \le z^{s} p_0^{1/12^{t}} \le z^{s} q(n,c), 
\]
where
\[
 s = \sum^{t}_{i=0} 1/12^{i} < 2 \quad \textrm{ and } \quad q(n,c) = n^{-c/(12^t)}.
\]
By applying the union bound we get that for any node $u$ executing $A_t$, the probability that one of its incident edges has a forbidden configuration is at most
\[
 \sum_{e : u \in e} z^2 q(n,c) < 3 z^2 q(n,c).
\]
Since $f_c(n) \in o(\log\log n)$, picking a sufficiently large $n$ yields $t = f_c(n) \le (\log\log n)/4$ and $1/12^{t} \ge 1/\log n$. It follows that $q(n,c) \le 1/2^{c}$ and by setting $c = \log(30z^2) \in O(1)$ we obtain
\[
\Pr[\textrm{node } u \textrm{ is incident to a forbidden configuration}] \le 3z^2 \cdot 1/2^{c} \le 1/10.
\]

Finally, observe that in 0 rounds, all nodes choose their output independently of each other (and using the same algorithm). Since each node needs to output a colour, at least one colour $c$ out of the three colours is picked with probability at least $1/3$. Now the probability that node $u$ has an edge of colour $c$ with a forbidden configuration is at least $1/3^2 > 1/10$, which is a contradiction.
\end{proof}

As observed in \sectionref{ssec:sinkless-definitions} we can obtain a sinkless colouring from a sinkless orientation without communication. This implies the following result.

\begin{corollary}
 There does not exist any Monte Carlo distributed algorithm solving the sinkless orientation problem in $d$-regular graphs with high probability in $o(\log \log n)$ rounds.
\end{corollary}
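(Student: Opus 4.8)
The plan is to read this off \theoremref{thm:colouring-lb} via the observation, already recorded in \sectionref{ssec:sinkless-definitions}, that a sinkless colouring can be extracted from a sinkless orientation with \emph{no communication at all}. Concretely, suppose for contradiction that $B$ is a Monte Carlo distributed algorithm solving the sinkless orientation problem in $d$-regular graphs with high probability in $f(n)\in o(\log\log n)$ rounds, and let $\sigma$ denote the orientation it produces. I would build a sinkless colouring algorithm $B'$ as follows: each node $u$ runs $B$ for $f(n)$ rounds and then, \emph{in zero additional rounds}, outputs $B'(u) = \min\{\psi(e) : e = \{u,v\},\ \sigma(e) = u \rightarrow v\}$, the smallest colour among the edges that $B$ orients out of $u$ (and an arbitrary colour if there is no outgoing edge). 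The reason this needs no further messages is that in the sinkless orientation problem the output is attached to edges and both endpoints of an edge $e$ agree on and output $\sigma(e)$; hence, once $B$ has finished, $u$ already knows the orientation of every incident edge and can evaluate $B'(u)$ locally.

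Next I would check that $B'$ succeeds on every run on which $B$ succeeds. Fix such a run, so $\outdeg(v,\sigma)>0$ for all $v$. Then for each node $u$ the set $\{\psi(e) : e=\{u,v\},\ \sigma(e)=u\rightarrow v\}$ is non-empty, so $B'(u)$ is the colour of some edge $e_u=\{u,v_u\}$ with $\sigma(e_u)=u\rightarrow v_u$. If some edge $e=\{u,v\}$ carried a forbidden configuration $B'(u)=\psi(e)=B'(v)$, then $e$ would be the chosen outgoing edge of \emph{both} $u$ and $v$, i.e.\ $\sigma(e)=u\rightarrow v$ and $\sigma(e)=u\leftarrow v$ at the same time, which is impossible. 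So whenever $B$ outputs a sinkless orientation, $B'$ outputs a valid sinkless colouring; since $B$ is correct with probability at least $1-1/n^c$, so is $B'$, and $B'$ runs in $f(n)\in o(\log\log n)$ rounds.

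This produces a Monte Carlo distributed algorithm solving sinkless colouring in $d$-regular graphs with high probability in $o(\log\log n)$ rounds, contradicting \theoremref{thm:colouring-lb}; hence no such sinkless orientation algorithm exists. There is essentially no obstacle: the whole argument is the verification that the colour-extraction step is genuinely communication-free (a consequence of the edge-output convention for sinkless orientation) and that it maps successful runs to successful runs pointwise, so the high-probability guarantee transfers verbatim, with no amplification of the failure probability and no use of the girth or of iteration. As in \theoremref{thm:colouring-lb}, the statement applies for every fixed $d\ge 3$ via the high-girth instances of \lemmaref{lemma:high-girth-graphs}.
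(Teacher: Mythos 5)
Your proposal is correct and follows the paper's own one-line argument: it instantiates the observation of \sectionref{ssec:sinkless-definitions} that a sinkless colouring is obtained from a sinkless orientation with zero extra rounds (take the smallest colour of an outgoing edge), notes the failure probability carries over unchanged, and contradicts \theoremref{thm:colouring-lb}. No essential difference from the paper, just spelled out in more detail.
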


Together with \theoremref{thm:sotolll} we get our main result. Note that in the following theorem we can plug in, for example, either of the commonly used LLL criteria: $ep(d+1) \le 1$ or $4pd \leq 1$.

\begin{corollary}
  Let $f\colon \N \to \Real$ be such that $f(4) \leq 16$. Let $A$ be a Monte Carlo distributed algorithm for LLL that finds an assignment avoiding all the bad events under the LLL criteria $pf(d) \leq 1$ with high probability. Then the running time of $A$ is $\Omega(\log \log n)$ rounds.
\end{corollary}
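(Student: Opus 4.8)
The plan is to derive the bound by chaining the reduction of \theoremref{thm:sotolll} with the preceding corollary, which asserts that no Monte Carlo distributed algorithm solves sinkless orientation in $d$-regular graphs with high probability in $o(\log\log n)$ rounds. So the argument will be a short proof by contradiction rather than any new construction.

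First I would suppose that $A$ is a Monte Carlo distributed LLL algorithm for the criterion $pf(d)\le 1$, with $f(4)\le 16$, that succeeds with high probability and runs in time $T(n)\in o(\log\log n)$. Since the hypothesis $f(4)\le 16$ is exactly what \theoremref{thm:sotolll} requires, that theorem produces a Monte Carlo distributed algorithm $B$ computing a sinkless orientation in $3$-regular graphs of girth at least $5$ in time $O(T)$, which is still $o(\log\log n)$. Next I would check that $B$ inherits the high-probability guarantee: on a $3$-regular input $G$ on $n$ nodes, $B$ simulates $A$ on the virtual $4$-regular graph $G'$ on $n/2$ nodes, and the only way $B$ can fail is if $A$ fails on $G'$; hence $B$'s failure probability on $G$ is at most $(n/2)^{-c}\le 2^{c}n^{-c}$, which, since $c$ is an arbitrarily large constant, is still $1/n^{\Omega(1)}$, i.e.\ high probability with respect to $G$. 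This gives a Monte Carlo distributed algorithm for sinkless orientation in $3$-regular graphs of girth at least $5$ running in $o(\log\log n)$ rounds with high probability. But the preceding corollary forbids this — note its proof already restricts to $3$-regular high-girth graphs supplied by \lemmaref{lemma:high-girth-graphs}, whose girth $\Omega(\log n)$ exceeds $5$ for all sufficiently large $n$ — so we reach a contradiction and conclude $T(n)=\Omega(\log\log n)$. Finally I would remark that plugging in $f(d)=e(d+1)$ or $f(d)=4d$ both satisfy $f(4)\le 16$, so the bound applies to each of the standard LLL criteria.

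The only place requiring genuine care — and it is a bookkeeping point rather than a real obstacle — is the propagation of the failure probability through the reduction: one must confirm that passing from instances of size $n$ to instances of size $n/2$ and absorbing the constant-factor slowdown does not weaken "high probability", and that the lower bound of the preceding corollary truly applies to the $3$-regular, girth-$\ge 5$ instance class that the reduction targets. Both are immediate because the constant $c$ in the definition of high probability is arbitrary and \lemmaref{lemma:high-girth-graphs} delivers $3$-regular graphs of logarithmic girth.
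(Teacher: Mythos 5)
Your proposal is correct and follows exactly the route the paper intends: the paper obtains this corollary simply by combining \theoremref{thm:sotolll} with the preceding sinkless-orientation lower bound, which is precisely what you do. The extra bookkeeping you supply — that the failure probability $(n/2)^{-c}$ remains $1/\mypoly(n)$ after the reduction, and that the hard instances from \lemmaref{lemma:high-girth-graphs} are $3$-regular with girth $\Omega(\log n)\ge 5$, so they lie in the class \theoremref{thm:sotolll} targets — is the right detail to check, and it goes through as you say.
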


Since any proper $d$-colouring of the nodes is also a sinkless colouring, we get the following lower bound as a by-product. Our lower bound can be contrasted with Linial's classical result~\cite{linial92locality}: Linial shows that any algorithm colouring a $d$-regular tree of radius $r$ in $2r/3$ rounds needs $\Omega(\sqrt{d})$ colours, and this result can be strengthened to $\Omega(d/\log d)$ colours using the graph constructions of Alon~\cite{alon10constant}. However, Linial's technique does not seem to imply any nontrivial lower bounds for the case of $d$ colours.

\begin{corollary}
 There does not exist any Monte Carlo distributed algorithm that finds a $d$-colouring in $d$-regular, bipartite, $\Omega(\log n)$-girth graphs with high probability in $o(\log \log n)$ rounds.
\end{corollary}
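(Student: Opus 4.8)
The plan is to reduce the sinkless colouring problem to proper node $d$-colouring with \emph{no} communication overhead, and then to invoke \theoremref{thm:colouring-lb}. The elementary point is that a proper node $d$-colouring is automatically a sinkless colouring with respect to \emph{any} proper edge $d$-colouring: if $\varphi\colon V \to [d]$ satisfies $\varphi(u)\neq\varphi(v)$ for every edge $\{u,v\}$, then in particular the forbidden configuration $\varphi(u)=\varphi(v)=\psi(\{u,v\})$ never arises, whatever $\psi$ may be. Hence any algorithm that outputs a proper $d$-colouring also outputs a sinkless colouring, and it need not even read the edge colouring.

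Concretely, I would argue by contradiction: suppose $A$ is a Monte Carlo distributed algorithm that finds a proper $d$-colouring of $d$-regular, bipartite, $\Omega(\log n)$-girth graphs with high probability in $f(n)\in o(\log\log n)$ rounds. Take the infinite family $\mathcal{G}$ of $d$-regular graphs from \lemmaref{lemma:high-girth-graphs}; since the construction in its proof passes through a bipartite double cover, $\mathcal{G}$ can be assumed to consist of bipartite graphs, each of girth $\Omega(\log n)$, and each admitting a proper edge $d$-colouring $\psi$. Running $A$ on the underlying graph $(V,E)$ of such a $G\in\mathcal{G}$ produces, with high probability, a proper node $d$-colouring $\varphi$, which by the observation above is a sinkless colouring of $(V,E,\psi)$. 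Thus $A$ solves the sinkless colouring problem on $\mathcal{G}$ with high probability in $o(\log\log n)$ rounds, contradicting \theoremref{thm:colouring-lb}.

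There is no real obstacle here, since the statement is essentially a by-product of the earlier results; the only points to check carefully are that the hard instances underlying the sinkless colouring lower bound can be taken bipartite and of girth $\Omega(\log n)$ — which is exactly what \lemmaref{lemma:high-girth-graphs} provides — and the minor bookkeeping that the sinkless colouring problem formally receives $\psi$ as input while the $d$-colouring problem does not, so in the reduction $\psi$ is supplied only to state the target problem and is ignored by $A$. For $d=3$ this is immediate from \theoremref{thm:colouring-lb} as proved; for general $d\geq 3$ the same argument applies once the mutual speedup lemma, and hence \theoremref{thm:colouring-lb}, is extended from $3$-regular to $d$-regular graphs, which is routine.
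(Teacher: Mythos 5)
Your proposal is correct and takes exactly the same route the paper intends: the corollary is flagged as a by-product of the observation that a proper node $d$-colouring is automatically a sinkless colouring, and you combine this with \lemmaref{lemma:high-girth-graphs} and \theoremref{thm:colouring-lb} in the obvious way. Your side remark is also accurate: the proof of \theoremref{thm:colouring-lb} as written works with $3$-regular graphs, so for $d>3$ one does need the (routine) extension of the mutual speedup machinery to $d$-regular graphs; the paper elides this, and you were right to flag it.
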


\section*{Acknowledgements}

The problem of sinkless orientations in the context of distributed computing was originally introduced by Mika Göös. We have discussed this problem and its variants with numerous people---including, at least, Laurent Feuilloley, Pierre Fraigniaud, Teemu Hankala, Joel Kaasinen, Petteri Kaski, Janne H.\ Korhonen, Juhana Laurinharju, Christoph Lenzen, Joseph S.\ B.\ Mitchell, Pekka Orponen, Thomas Sauerwald, Stefan Schmid, Przemysław Uznański, and Uri Zwick---many thanks to all of you!

\def\UrlFont{\sf\footnotesize}
\bibliographystyle{plainnat}
\bibliography{lll-lb}

\end{document}